\documentclass[aps,prx,twocolumn,superscriptaddress,showpacs,final,floatfix,longbibliography]{revtex4-2}

\usepackage[utf8]{inputenc}
\usepackage[toc,page]{appendix}
\usepackage{amsfonts}
\usepackage{graphicx}
\usepackage{amsmath}
\usepackage{amssymb}
\usepackage{hyperref}
\usepackage[capitalise,nameinlink]{cleveref}
\usepackage{color}
\usepackage{mathrsfs}
\usepackage{isomath}
\usepackage{amsthm}
\usepackage{epstopdf}
\usepackage{txfonts}
\usepackage{dsfont}
\usepackage{ulem}
\usepackage{physics}
\allowdisplaybreaks[4]

\crefname{section}{Sec.}{Secs.}
\crefname{subsection}{Sec.}{Secs.}
\crefname{subsubsection}{Sec.}{Secs.}

\newtheorem{theorem}{Theorem}

\newtheorem{lemma}{Lemma}

\newtheorem{proposition}{Proposition}

\newtheorem{corollary}{Corollary}
\usepackage{bbold}
\usepackage{mathtools}
\usepackage{bm}

\newcommand{\R}{\mathbb{R}}

\renewcommand{\norm}[1]{\left\|#1\right\|}
\newcommand{\Id}{I}
\newcommand{\UPB}{\mathrm{UPB}}

\begin{document}

\title{Entanglement certification via causal-order interferometry in a quantum switch}

\author{Haojie Wang}
\address{State Key Laboratory of Artificial Microstructure and Mesoscopic Physics, School of Physics, Frontiers Science Center for Nano-optoelectronics, $\&$ Collaborative Innovation Center of Quantum Matter, Peking University, Beijing 100871, China}
\affiliation{Hefei National Laboratory, Hefei 230088, China}

\author{Shuheng Liu}
\email{liushuheng@pku.edu.cn}
\address{State Key Laboratory of Artificial Microstructure and Mesoscopic Physics, School of Physics, Frontiers Science Center for Nano-optoelectronics, $\&$ Collaborative Innovation Center of Quantum Matter, Peking University, Beijing 100871, China}

\author{Qiongyi He}
\address{State Key Laboratory of Artificial Microstructure and Mesoscopic Physics, School of Physics, Frontiers Science Center for Nano-optoelectronics, $\&$ Collaborative Innovation Center of Quantum Matter, Peking University, Beijing 100871, China}
\affiliation{Hefei National Laboratory, Hefei 230088, China}
\address{Collaborative Innovation Center of Extreme Optics, Shanxi University, Taiyuan, Shanxi 030006, China}

\date{\today}

\begin{abstract}
Entanglement certification is often performed on states that have already undergone noisy transmission or processing. Noise can reduce the surviving entanglement and can also cause a given criterion to fail even when entanglement remains. In this context, the quantum switch, a paradigmatic realization of indefinite causal order (ICO), coherently controls the orders in which two channels act and has been shown to offer advantages across a range of quantum information-processing tasks. Here we ask whether this coherent control enlarges the noise-parameter region in which entanglement remains certifiable. We regard the two order branches as the arms of a causal-order interferometer and insert a local unitary between the channel uses to tune their interference. For stochastic Pauli noise, a postselected ICO output can exhibit greater entanglement negativity than any classical mixture of the two definite orders; in particular, we identify regimes where its negativity remains nonzero while that of every classical mixture vanishes. A suitable local Pauli unitary substantially enlarges this ICO-only region, while an input-dependent path-difference indicator qualitatively links operator noncommutativity to the postselected negativity gain. Numerical examples extend the advantage to local amplitude-damping noise and two-qutrit Weyl noise. At a representative Weyl-noise point for the $3\times 3$ positive-partial-transpose (PPT) Tiles bound-entangled state, a nondecomposable witness detects the postselected ICO output, whereas an analytic bound excludes detection of the definite-order outputs and their mixtures by the entire locally rotated witness family. These results identify causal-order interferometry as a strategy for enhancing entanglement certification across distinct noise models and dimensions.
\end{abstract}

\maketitle

\section{Introduction}

Entanglement is a key resource for quantum information tasks such as quantum computation and quantum metrology \cite{GuhneEntanglementDetection2009,FriisEntanglementCertification2019}. In realistic implementations, however, the state reaching the measurement stage is affected by decoherence and other open-system noise, such as depolarization and amplitude damping, which can reduce or destroy entanglement \cite{YuFiniteTimePRL2004,AlmeidaEnvironmentInducedScience2007,AolitaOpenSystem2015}. Even when entanglement survives, noise may move the state outside the detection region of a given criterion. The Peres--Horodecki positive partial transpose (PPT) criterion is necessary and sufficient for separability in $2\times 2$ and $2\times 3$ systems \cite{PeresSeparabilityCriterionPRL1996,HorodeckiSeparabilityOfPLA1996}. Negativity and logarithmic negativity are computable partial-transpose-based entanglement monotones that vanish on all PPT states \cite{VidalComputableMeasurePRA2002,PlenioLogarithmicNegativityPRL2005}. In higher-dimensional systems, however, there exist PPT-entangled states \cite{HorodeckiMixedStatePRL1998,BennettUnextendibleProductPRL1999}, whose detection requires nondecomposable entanglement witnesses \cite{LewensteinOptimizationOfPRA2000}. Certifying the entanglement that survives such a noisy channel is therefore particularly challenging when the output lies close to or inside the PPT region.

Indefinite causal order (ICO) allows noisy channels to act in a coherent superposition of their orders. It describes processes that cannot be written as a classical mixture of definite causal orders \cite{OreshkovQuantumCorrelationsNC2012}. One typical realization is the quantum switch, where a control qubit coherently controls the order in which two operations or channels are applied \cite{ChiribellaPerfectDiscriminationPRA2012,ColnaghiQuantumComputationPLA2012,ChiribellaQuantumComputationsPRA2013}. In both theory and experiment, the quantum switch has been used to certify causal nonseparability \cite{AraujoWitnessingCausalNJP2015, Oreshkov_2016, GoswamiIndefiniteCausalPRL2018, RozemaExperimentalAspects2024, CaoSemiDevice2023, vanDerLugt2023DeviceIndependent,RichterTowardAn2026}. While originally formulated for unitary operations, the 2-slot quantum switch extends naturally to general non-unitary quantum operations \cite{AbbottCommunicationThrough2020,DongTheQuantumSwitch2023}.

Placing noisy channels in a coherent superposition of their causal orders can provide advantages in quantum information tasks that cannot be achieved with any definite causal order. The quantum switch allows two completely depolarizing channels, which preserve no input information in either definite order, to yield a nonzero classical communication capacity through the coherent interference between their two orders \cite{Ebler2018EnhancedCommunication, GoswamiIncreasingCommunicationPRResearch2020}. Coherent control of causal orders and trajectories has also been used to enhance the transmission of quantum information through noisy channels \cite{GuoExperimentalTransmissionPRL2020,RubinoExperimentalQuantumPRResearch2021,ChiribellaIndefiniteCausalNJP2021,PhysRevLett.133.040401,PhysRevA.111.012605}, and similar strategies have been applied to entanglement-related tasks such as entanglement generation and distribution, entanglement distillation, and quantum teleportation \cite{10185447, PhysRevApplied.23.054075, 1f8v-6p1d, dey, CaleffiQuantumSwitch2020, PhysRevA.108.062601, MukhopadhyaySuperpositionOf2020, BAN2023128927}. While these studies focus on generating, transmitting, or processing entanglement, we ask whether, for a fixed bipartite input and a fixed entanglement criterion, quantum-switch preprocessing enlarges the noise-parameter region in which entanglement can be certified. To infer input entanglement from a detected output, each selected conditional map must be separability preserving across $A|B$ \cite{VidalComputableMeasurePRA2002,PlenioLogarithmicNegativityPRL2005}; otherwise the preprocessing itself could create the detected entanglement.

In this work, we answer this question in the affirmative. We place two noisy channels in a quantum switch and regard the two channel orders as the arms of a causal-order interferometer. We further introduce an intermediate local unitary operation $U$ between the channels to actively control~\cite{glc7-xy8t} the interference between different causal paths. Each normalized conditional state is then compared with the outputs of the two definite orders and their arbitrary classical mixtures. For Pauli noise, we show that an inserted local Pauli unitary determines the relative sign between the two ordered Kraus products of each path pair and hence which control outcome receives that pair. Postselection can therefore route the noise components most detrimental to entanglement certification into the discarded outcome. The ICO output exhibits a larger negativity in certain noise regimes, and there exist regions where all definite-order outputs and their classical mixtures are PPT, while the ICO conditional state has a negative partial transpose (NPT). We further prove that, when all conditional maps admit separable Kraus decompositions, none of these maps can generate entanglement from separable inputs, and the sum of the probability-weighted negativities of the conditional states over all outcomes does not exceed the negativity of the input state. The Pauli and Weyl constructions below, as well as the amplitude-damping example with $U=I$, satisfy this condition.

As an example, we consider a Bell state subject to two global depolarizing channels on $A|B$ and compare the results obtained without an intermediate local unitary and with the insertion of $U=\sigma_z\otimes \sigma_z$. In the case of  $U=\sigma_z\otimes \sigma_z$, the optimal postselected ICO output exhibits a substantially larger negativity than the definite-order outputs, while entanglement remains certifiable even as one channel approaches the completely depolarizing limit, where all definite-order outputs and their classical mixtures are PPT. We also introduce the difference between the paths associated with the two causal orders to characterize how operator noncommutativity \cite{3jlc-lb5c, kong2026enhancedquantummetrologycriticalityassisted, PhysRevLett.130.170201} enters causal-order interference and how it relates to the ICO advantage. When no operation is applied between the channels, this path difference reduces to the commutator of the corresponding Kraus operators. We also find a positive postselected negativity gain for local amplitude-damping noise, and for two-qutrit Weyl noise the ICO protocol again certifies NPT entanglement over a broader parameter region; the certification advantage therefore persists under non-Pauli noise and in high-dimensional systems. For PPT entanglement, taking the $3\times 3$ Tiles bound-entangled state constructed from an unextendible product basis (UPB) \cite{BennettUnextendibleProductPRL1999} as an example, we derive an analytic lower bound that validates an explicit nondecomposable entanglement witness. At the representative parameters analyzed in \cref{subsec5_C}, we prove that no witness in this locally rotated family detects the DCO outputs and identify an explicit member that detects the ICO conditional output.

\section{Quantum switch as a causal-order interferometer}\label{sec2}

We consider a bipartite discrete-variable state $\rho_{AB}$ and compare two preprocessing strategies before applying the same entanglement criterion across $A|B$. In both strategies, $\rho_{AB}$ passes through two completely positive trace-preserving channels $\mathcal E_1$ and $\mathcal E_2$, either in a definite order or in a quantum switch, with Kraus representations
\begin{equation}
\mathcal{E}_{1}(\rho)=\sum_{i}K_{i}^{(1)}\rho K_{i}^{(1)\dagger},\quad
\mathcal{E}_{2}(\rho)=\sum_{j}K_{j}^{(2)}\rho K_{j}^{(2)\dagger},
\end{equation}
where $K^{(1)}_i$ and $K^{(2)}_j$ are the Kraus operators of $\mathcal{E}_1$ and $\mathcal{E}_2$, respectively, satisfying $\sum_iK_i^{(1)\dagger}K_i^{(1)}=I$ and $\sum_jK_j^{(2)\dagger}K_j^{(2)}=I$.
\begin{figure}[htpb]
\centering
\includegraphics[width=0.45\textwidth]{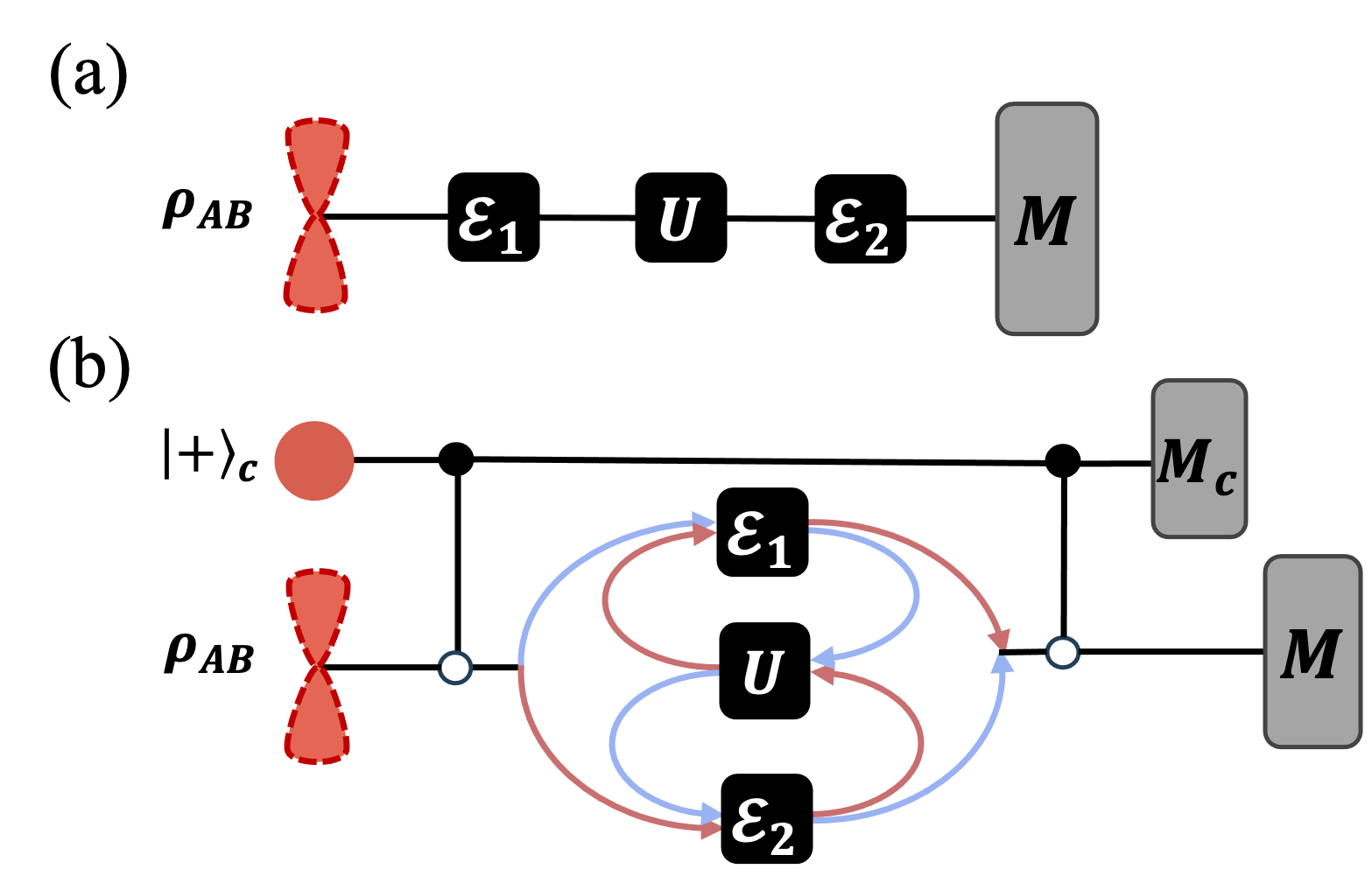}
\caption{
Schematic of the preprocessing strategies compared for entanglement certification. (a) In the definite causal order (DCO) benchmark, the target state undergoes the two noisy channels sequentially, with a local unitary inserted between them. (b) In the indefinite causal order (ICO) protocol, a control qubit coherently controls the two channel orders. Measuring the control selects a conditional target output, on which the same entanglement measurement is then performed.
}
\label{fig:1}
\end{figure}

Throughout this work, each noisy channel is assumed to admit a Kraus representation consisting of product Kraus operators with respect to $A|B$, so neither channel generates entanglement from an $A|B$-separable input. A local unitary $U=U_A\otimes U_B$ may be applied once between the first and second channel uses in either order component to tune their interference. \cref{fig:1} summarizes the definite causal order (DCO) and ICO preprocessing structures compared in this paper. For notational convenience, we use $\rightarrow$ to denote the order $\mathcal{E}_1$ followed by $U$ and $\mathcal{E}_2$, while $\leftarrow$ denotes the reversed order. The definite-order Kraus operators are $H_{ij}^{\rightarrow}=K_j^{(2)} U K_i^{(1)}$ and $H_{ij}^{\leftarrow}=K_i^{(1)} U K_j^{(2)}$. The corresponding DCO outputs are
\begin{equation}
\rho_f^{\rightarrow}=\sum_{i,j}H_{ij}^{\rightarrow}\rho_{AB}
(H_{ij}^{\rightarrow})^\dagger,
\qquad \rho_f^{\leftarrow}=\sum_{i,j}H_{ij}^{\leftarrow}\rho_{AB}
(H_{ij}^{\leftarrow})^\dagger ,
\end{equation}
and their classical mixture is
\begin{equation}
\rho_f^{\rm MIX}(\xi)=\xi\rho_f^{\rightarrow}+(1-\xi)\rho_f^{\leftarrow}
\label{eq:mix}
\end{equation}
with $\xi\in [0,1]$. When $\xi=1/2$, we omit the argument and write the equally weighted mixture simply as $\rho_f^{\rm MIX}$. The classical benchmark $\rho_f^{\rm MIX}(\xi)$ is a convex mixture of the two definite-order outputs and therefore retains no coherence between the order branches. In the quantum-switch protocol shown in \cref{fig:1}(b), a control qubit is initialized in $|+\rangle_c=(|0\rangle+|1\rangle)/\sqrt{2}$, such that the control states $|0\rangle_c$ and $|1\rangle_c$ are associated with the ordered transformations $H_{ij}^{\rightarrow}$ and $H_{ij}^{\leftarrow}$, respectively. The joint Kraus operators are \cite{ChiribellaQuantumComputationsPRA2013,DongTheQuantumSwitch2023}
\begin{equation}
K_{ij}^{\rm SW}
= |0\rangle\langle 0|\otimes H_{ij}^{\rightarrow}
+ |1\rangle\langle 1|\otimes H_{ij}^{\leftarrow}.
\end{equation}
Measuring the control in the phase basis $|\pm_{\phi}\rangle=(|0\rangle\pm e^{i\phi}|1\rangle)/\sqrt{2}$ induces the conditional Kraus operators
\begin{equation}
M_{ij}^{\pm,\phi}
= \frac{1}{2}\left(H_{ij}^{\rightarrow}
\pm e^{-i\phi}H_{ij}^{\leftarrow}\right).
\label{eq:M_ij}
\end{equation}
Control postselection coherently combines the definite-order Kraus operators $H_{ij}^{\rightarrow}$ and $H_{ij}^{\leftarrow}$, whereas the classical benchmark in \cref{eq:mix} combines only the resulting output states. The corresponding trace-nonincreasing completely positive maps are
\begin{equation}
\Lambda_{\pm,\phi}(\rho)
= \sum_{i,j}M_{ij}^{\pm,\phi}\rho M_{ij}^{\pm,\phi\dagger},
\label{eq:Lambda}
\end{equation}
where $\pm$ labels the control measurement outcome and $\phi$ specifies the measurement phase. For input $\rho_{AB}$, let $\rho_f^{\pm,\phi}=\Lambda_{\pm,\phi}(\rho_{AB})$ and $P_{\pm,\phi}=\operatorname{Tr}\rho_f^{\pm,\phi}$. If $P_{\pm,\phi}>0$, the normalized conditional state is $\widehat{\rho}_f^{\pm,\phi}=\rho_f^{\pm,\phi}/P_{\pm,\phi}$. Combining \cref{eq:M_ij,eq:Lambda}, we have
\begin{equation}
\begin{aligned}
\rho_f^{\pm,\phi}
&= \frac{1}{4}\sum_{i,j}(
H_{ij}^{\rightarrow}\rho_{AB}
(H_{ij}^{\rightarrow})^\dagger
+ H_{ij}^{\leftarrow}\rho_{AB}
(H_{ij}^{\leftarrow})^\dagger\\
&\quad \pm e^{i\phi} H_{ij}^{\rightarrow}\rho_{AB}
(H_{ij}^{\leftarrow})^\dagger
\pm e^{-i\phi} H_{ij}^{\leftarrow}\rho_{AB}
(H_{ij}^{\rightarrow})^\dagger ).
\end{aligned}
\label{eq:final_state}
\end{equation}
The last two terms are the cross-order interference contributions.

The enhancement due to causal-order interference is related to the operator noncommutativity along the ordered paths. For an $X$-basis control measurement, i.e., $\phi=0$, the interference terms in \cref{eq:final_state} motivate the path-difference operator
\begin{equation}
\Delta_{ij}(U)=H_{ij}^{\rightarrow}-H_{ij}^{\leftarrow}.
\end{equation}
Equivalently, this operator can be expressed entirely in terms of commutators as
\begin{equation}
\begin{aligned}
\Delta_{ij}(U)
={}&K_j^{(2)}[U,K_i^{(1)}]
+[K_j^{(2)},K_i^{(1)}]U\\
&+K_i^{(1)}[K_j^{(2)},U].
\end{aligned}
\end{equation}
This decomposition makes explicit how operator noncommutativity differentiates the two ordered paths~\cite{PhysRevLett.130.170201}. When $U=\mathbb I$, it reduces to $\Delta_{ij}=[K_j^{(2)},K_i^{(1)}]$. We then construct an operator associated with the input state and all pairs of Kraus operators,
\begin{equation}
\mathcal{Q}_\rho(U) = \sum_{i,j} \Delta_{ij}(U)\rho_{AB}
\Delta_{ij}^\dagger(U).
\label{eq:Q-rho}
\end{equation}
Since $M_{ij}^{-,0}=\Delta_{ij}(U)/2$, the unnormalized $(-)$ conditional output for the $X$-basis measurement satisfies
\begin{equation}
\rho_f^{-,0}=\frac{1}{4}\mathcal Q_\rho(U).
\label{eq:conditional-Q}
\end{equation}
The individual operators $\Delta_{ij}(U)$ provide a pathwise decomposition and depend on the chosen Kraus representations. Their aggregate $\mathcal Q_\rho(U)=4\rho_f^{-,0}$, however, is fixed by the physical conditional output and is therefore representation independent. We use the Hilbert--Schmidt norm
\begin{equation}
\mathcal{I}(U) = \|\mathcal{Q}_\rho(U)\|_{\mathrm{HS}}
\end{equation}
as an input-dependent scalar indicator of the path-difference magnitude, and compare it with the negativity gain of the normalized $(+)$-conditional state in \cref{non_commu}.

\section{Pauli channels}

Causal-order interference is most explicit for Pauli channels with a local Pauli unitary inserted between them, which we therefore analyze in detail below.

\subsection{Exact selection of Pauli paths by causal-order interference}

For Pauli Kraus operators and a Pauli-string insertion $U$, the two ordered products $H_{ij}^{\rightarrow}$ and $H_{ij}^{\leftarrow}$ associated with a path pair contain the same Pauli string and therefore differ at most by a sign. An $X$-basis measurement of the control consequently assigns each path pair exclusively to one outcome according to this relative sign~\cite{GoswamiIncreasingCommunicationPRResearch2020, Kechrimparis2025ProbabilisticDistillation, 10048485}, as formalized in Theorem~\ref{thm:pauli-parity}.
\begin{theorem}\label{thm:pauli-parity}
Consider two noisy channels with Pauli Kraus realizations $K_i^{(1)}=\sqrt{p_i}P_i$ and $K_j^{(2)}=\sqrt{q_j}Q_j$, where $P_i$ and $Q_j$ are Pauli strings, and let the local unitary operation $U$ also be a Pauli string. Define the relative sign label $s_{ij}(U)\in \{0,1\}$ through $Q_j U P_i = (-1)^{s_{ij}(U)} P_i U Q_j$. For $\phi=0$, the ICO conditional maps satisfy
\begin{equation}
\Lambda_+(\rho)
= \sum_{s_{ij}(U)=0}
p_i q_j
S_{ij}\rho S_{ij}^\dagger,
\qquad \Lambda_-(\rho)
= \sum_{s_{ij}(U)=1}
p_i q_j
S_{ij}\rho S_{ij}^\dagger,
\end{equation}
where $S_{ij}=P_i U Q_j$. Thus, the $(+)$ and $(-)$ control outcomes retain exactly the path pairs with $s_{ij}(U)=0$ and $s_{ij}(U)=1$, respectively.
\end{theorem}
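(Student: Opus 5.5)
The plan is a direct computation from \cref{eq:M_ij} and \cref{eq:Lambda} with $\phi=0$, using only the algebra of Pauli strings. First I will compute the two ordered Kraus products. With $K_i^{(1)}=\sqrt{p_i}P_i$ and $K_j^{(2)}=\sqrt{q_j}Q_j$, they are
\begin{equation}
H_{ij}^{\rightarrow}=\sqrt{p_iq_j}\,Q_jUP_i,
\qquad
H_{ij}^{\leftarrow}=\sqrt{p_iq_j}\,P_iUQ_j=\sqrt{p_iq_j}\,S_{ij}.
\end{equation}

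The key fact is that any two Pauli strings either commute or anticommute. I will write $A\sim B$ for ``$A$ and $B$ either commute or anticommute''. Then $Q_j\sim P_i$, $Q_j\sim U$ and $U\sim P_i$. Moving $Q_j$ to the right past $U$ and $P_i$, and then swapping $U$ and $P_i$, turns $Q_jUP_i$ into $P_iUQ_j$ up to a product of three signs. This justifies the definition of $s_{ij}(U)\in\{0,1\}$ in the statement, with $(-1)^{s_{ij}(U)}$ equal to that product of signs. Hence
\begin{equation}
H_{ij}^{\rightarrow}=(-1)^{s_{ij}(U)}\sqrt{p_iq_j}\,S_{ij}.
\end{equation}

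Substituting into \cref{eq:M_ij} with $\phi=0$ gives
\begin{equation}
M_{ij}^{\pm,0}=\tfrac12\sqrt{p_iq_j}\,\bigl((-1)^{s_{ij}(U)}\pm1\bigr)S_{ij}.
\end{equation}
The scalar prefactor has modulus $1$ when $(-1)^{s_{ij}(U)}=\pm1$ and vanishes otherwise. So $M_{ij}^{+,0}=\pm\sqrt{p_iq_j}\,S_{ij}$ when $s_{ij}(U)=0$, and $M_{ij}^{+,0}=0$ when $s_{ij}(U)=1$. Symmetrically, $M_{ij}^{-,0}$ survives exactly when $s_{ij}(U)=1$, with a global sign of the same kind.

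Plugging these into \cref{eq:Lambda}, the global signs cancel in $M\rho M^\dagger$, and each surviving term becomes $p_iq_j\,S_{ij}\rho S_{ij}^\dagger$. This yields the two stated formulas for $\Lambda_\pm$. It also shows that every path pair is assigned to exactly one outcome.

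The only step that needs care is the consistency of $s_{ij}(U)$: it must be well defined and fixed by the Pauli strings alone. This follows because the Pauli strings are unitary and nonzero, so the relation $Q_jUP_i=(-1)^{s}P_iUQ_j$ determines $s$ uniquely. One point must be stated explicitly. Any phases carried by the Pauli strings (for example, factors of $i$) are the same on both sides, since the same three operators appear in each product. They therefore drop out of the relative sign, and they cancel in $S_{ij}\rho S_{ij}^\dagger$.
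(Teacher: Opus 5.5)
Your proposal is correct and follows the paper's proof essentially step for step. You use the commute-or-anticommute property of Pauli strings to get $H_{ij}^{\rightarrow}=(-1)^{s_{ij}(U)}H_{ij}^{\leftarrow}$, then substitute into \cref{eq:M_ij} at $\phi=0$ so that exactly one of $M_{ij}^{\pm}$ vanishes while the other equals $\pm\sqrt{p_iq_j}\,S_{ij}$, whose sign cancels in \cref{eq:Lambda}; your explicit three-swap derivation of the sign and the uniqueness remark (from $S_{ij}\neq 0$) simply spell out the well-definedness of $s_{ij}(U)$ that the paper asserts in one line.
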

\begin{proof}
For each path pair, $H_{ij}^{\rightarrow} = \sqrt{p_i q_j} Q_j U P_i$ and $H_{ij}^{\leftarrow} = \sqrt{p_i q_j} P_i U Q_j$. Since Pauli strings either commute or anticommute, $s_{ij}(U)$ is well defined and $H_{ij}^{\rightarrow} = (-1)^{s_{ij}(U)} H_{ij}^{\leftarrow}$. Substituting this relation into \cref{eq:M_ij} at $\phi=0$ gives
\begin{equation}
M_{ij}^{+}
= \frac{1+(-1)^{s_{ij}(U)}}{2}
H_{ij}^{\leftarrow},\qquad
M_{ij}^{-}
= \frac{(-1)^{s_{ij}(U)}-1}{2}
H_{ij}^{\leftarrow}.
\end{equation}
For $s_{ij}(U)=0$, $M_{ij}^{-}=0$ and the path pair contributes only to the $(+)$ outcome. For $s_{ij}(U)=1$, $M_{ij}^{+}=0$ and $M_{ij}^{-}=-H_{ij}^{\leftarrow}$, where the sign is an irrelevant global phase that cancels in \cref{eq:Lambda}.
\end{proof}

For a local Pauli unitary $U$, every nonvanishing conditional Kraus operator is proportional to $S_{ij}=P_iUQ_j$, which is a product operator with respect to $A|B$. Each conditional map therefore admits a separable Kraus decomposition and cannot create $A|B$ entanglement from a separable input; \cref{Lemma1} below formalizes this statement.

Theorem~\ref{thm:pauli-parity} characterizes the causal-order interference at the level of individual Pauli paths. Because $s_{ij}(U)$ depends on $U$, the inserted Pauli unitary determines which control outcome retains each path pair, so postselection can route the more harmful Pauli noise components into the discarded outcome. This algebraic feature has appeared in studies of communication enhancement \cite{Ebler2018EnhancedCommunication, GoswamiIncreasingCommunicationPRResearch2020, RubinoExperimentalQuantumPRResearch2021, PhysRevA.111.012605}, perfect communication \cite{ChiribellaIndefiniteCausalNJP2021}, and probabilistic channel distillation of Pauli channels \cite{Kechrimparis2025ProbabilisticDistillation}.

\subsection{NPT certification of two-qubit target states}\label{subsec2-C}

We take $\rho_{AB}=|\Phi^+\rangle\langle\Phi^+|$ with $|\Phi^+\rangle=(|00\rangle+|11\rangle)/\sqrt{2}$, and let both $\mathcal E_1$ and $\mathcal E_2$ be global depolarizing channels acting on the four-dimensional joint system $AB$. Each such global depolarizing channel has the Pauli-random-unitary form
\begin{equation}
\mathcal E_{\mathrm{DPC}}^{(2q)}(\rho)
= p\rho+\frac{1-p}{15}
\sum_{P\in \mathcal P_2\setminus\{I\otimes I\}}P\rho P^\dagger,
\end{equation}
where $p=(15\lambda+1)/16$ and $\lambda\in [0,1]$ is the depolarizing parameter of the standard two-qubit depolarizing channel, and $\mathcal P_2=\{P_A\otimes P_B: P_A,P_B\in \{I,X,Y,Z\}\}$. Let $\boldsymbol{\lambda}=(\lambda_1,\lambda_2)$ denote the depolarizing parameters of the two channels, and let $U$ be a fixed local unitary operation. We consider control measurements in the $X$ basis, i.e., the phase basis with $\phi=0$, so each run yields one of the two outcomes $\pm$. When negativity $\mathcal{N}(\rho_{AB}) =(\|\rho_{AB}^{T_B}\|_1-1)/2 $ is used to quantify entanglement, its convexity follows from the linearity of partial transposition and the convexity of the trace norm. Hence, no classical mixture can exceed the larger negativity of the two endpoints. According to \cref{eq:mix}, the classical causal order benchmark can be written equivalently as
\begin{equation}
\mathcal N_{\mathrm{cl}}^{U}(\boldsymbol{\lambda})
= \max_{\xi\in [0,1]}
\mathcal N(\rho_{f,\boldsymbol{\lambda}}^{{\rm MIX},U}(\xi))
= \max
\{ \mathcal N(\rho_{f,\boldsymbol{\lambda}}^{\rightarrow,U}),
\mathcal N(\rho_{f,\boldsymbol{\lambda}}^{\leftarrow,U}) \}.
\label{eq:classical-negativity-comparison}
\end{equation}
If $\mathcal N_{\mathrm{cl}}^{U}(\boldsymbol{\lambda})=0$, then the PPT criterion detects none of these outputs as entangled. With $r\in \{(+,0),(-,0)\}$ denoting the two control outcomes at $\phi=0$, we define the optimized negativity of the ICO conditional state for the same inserted unitary as
\begin{equation}
\mathcal N_{\mathrm{ICO}}^{U}(\boldsymbol{\lambda})
= \max_{r}
\mathcal N(\widehat\rho_{f,\boldsymbol{\lambda}}^{r,U}).
\end{equation}
The negativity gain of the conditional state and the ICO-only NPT-certifiable region are then defined as
\begin{equation}
\Delta_{\mathcal N}^{U}(\boldsymbol{\lambda})
= \mathcal N_{\mathrm{ICO}}^{U}(\boldsymbol{\lambda})-\mathcal N_{\mathrm{cl}}^{U}(\boldsymbol{\lambda}),
\end{equation}
\begin{equation}
\mathcal R_{\mathrm{ICO}\text{-}\mathrm{only}}^{U}
= \{ \boldsymbol{\lambda}:
\mathcal N_{\mathrm{cl}}^{U}(\boldsymbol{\lambda})=0,
\mathcal N_{\mathrm{ICO}}^{U}(\boldsymbol{\lambda})>0,
P_{\rm opt}^{U}(\boldsymbol{\lambda})>0 \}.
\end{equation}
Here, $P_{\rm opt}^{U}(\boldsymbol{\lambda})$ is the success probability of the optimal control outcome that attains $\mathcal N_{\mathrm{ICO}}^{U}$, while $\mathcal N_{\mathrm{ICO}}^{U}$ itself is the negativity of the normalized conditional state. Thus, $\mathcal R_{\mathrm{ICO}\text{-}\mathrm{only}}^{U}$ is the set of noise parameters for which every definite-order output and every classical mixture of them is PPT, while at least one nonzero-probability ICO branch is NPT. As shown in Theorem~\ref{thm:pauli-parity}, the choice of $U$ can change the sign label $s_{ij}(U)$ assigned to each path pair. \cref{fig:pauli} compares $U=I$ and $U=\sigma_z\otimes \sigma_z$ in the noise parameter space. For $U=I$, the negativity gain is positive over a broad region but remains small, and the ICO-only NPT-certifiable region is narrow. With $U=\sigma_z\otimes \sigma_z$, this region expands substantially, the peak negativity gain is approximately one order of magnitude larger, and the selected conditional output remains NPT as one channel approaches complete depolarization.
\begin{figure}[htbp]
\centering
\includegraphics[width=0.5\textwidth]{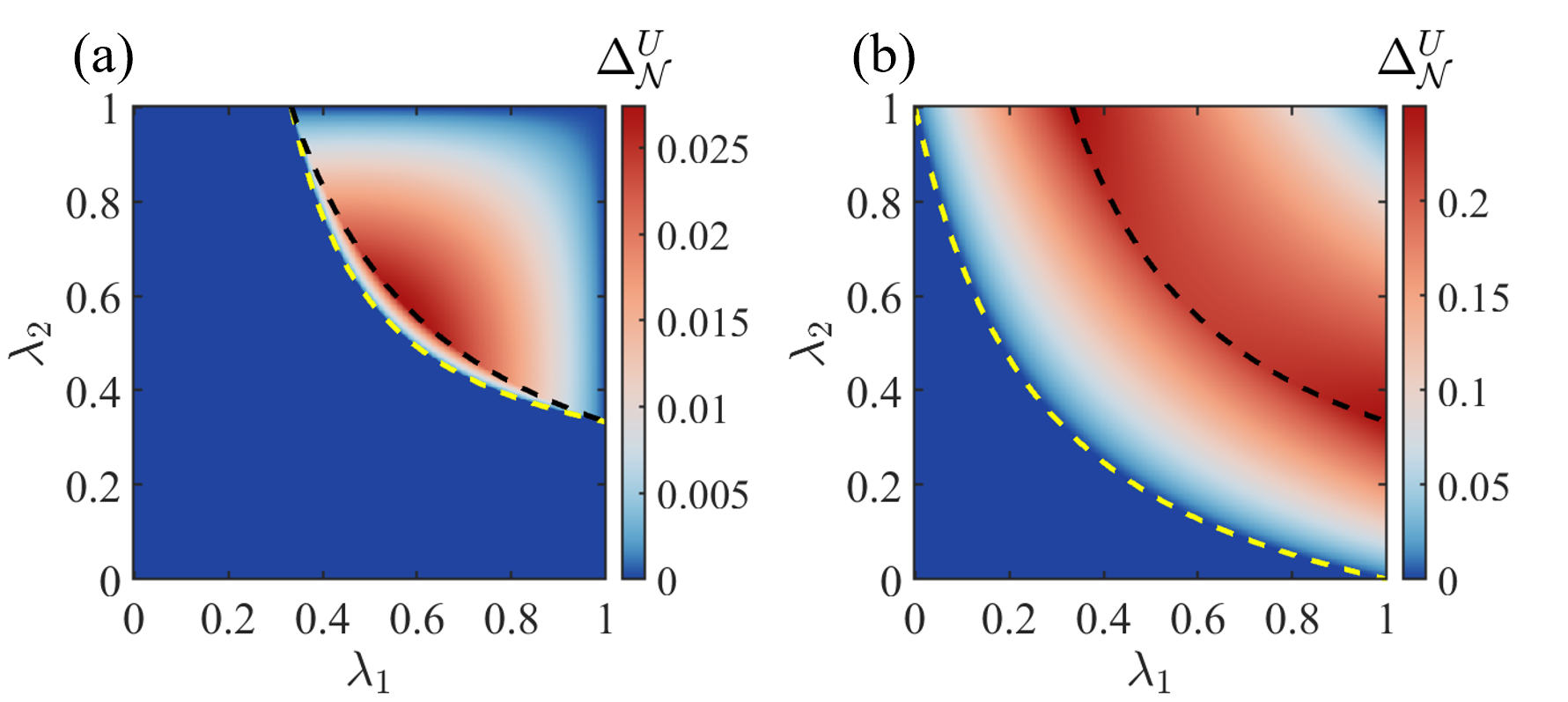}
\caption{Conditional-state negativity gain $\Delta_{\mathcal N}^{U}$ and NPT-certifiable regions under two global depolarizing channels on $AB$. (a) $U=I$. (b) $U=\sigma_z\otimes \sigma_z$. The horizontal and vertical axes denote the depolarizing parameters $\lambda_1$ and $\lambda_2$ of the two channels, and the color scale gives $\Delta_{\mathcal N}^{U}$. The yellow and black dashed curves mark the ICO and DCO NPT boundaries, respectively; only the ICO protocol certifies NPT entanglement in the region between them.}
\label{fig:pauli}
\end{figure}

We next state a sufficient condition under which a postselected branch cannot create $A|B$ entanglement from a separable input. The proof is given in \cref{proof_L1}.
\begin{lemma}
\label{Lemma1}
Let $A|B$ be the bipartition to be certified. Let $\{\Lambda_r\}_r$ be a set of trace-nonincreasing completely positive maps labeled by a classical outcome $r$, such that the total map $\Lambda=\sum_r\Lambda_r$ is trace preserving. For an input state $\rho_{AB}$, we define the outcome probability and the normalized conditional state as $p_r=\operatorname{Tr}\Lambda_r(\rho_{AB})$ and $\rho_{AB}^{(r)}=\Lambda_r(\rho_{AB})/p_r$, considering only outcomes with $p_r>0$.

If each $\Lambda_r$ admits a separable Kraus decomposition with respect to $A|B$, then, for any separable input $\sigma_{AB}\in {\rm SEP}$, the conditional output $\sigma_{AB}^{(r)}$ remains in ${\rm SEP}$. Hence, the conditional map $\Lambda_r$ cannot generate $A|B$ entanglement from a separable input.

More generally, suppose that each branch is PPT-preserving, meaning that $\Phi_r=\Gamma_B\circ \Lambda_r\circ \Gamma_B$ is positive for every outcome $r$, where $\Gamma_B$ denotes partial transposition on subsystem $B$. Then the probability-weighted negativity of the conditional states obeys the average monotonicity relation~\cite{VidalComputableMeasurePRA2002,PlenioLogarithmicNegativityPRL2005}
\begin{equation}
\sum_r p_r\mathcal N(\rho_{AB}^{(r)})
\le \mathcal N(\rho_{AB}).
\end{equation}
In particular, for each outcome $r$,
\begin{equation}
p_r\mathcal N(\rho_{AB}^{(r)})
\le \mathcal N(\rho_{AB}).
\end{equation}
\end{lemma}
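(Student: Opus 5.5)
The lemma has two parts, and I will prove them separately. The first part is a direct computation. The second is the standard Vidal--Werner argument for negativity monotonicity, adapted to the setting where the conditional maps are PPT-preserving.

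\emph{Separability preservation.} Write $\Lambda_r(X)=\sum_k (A_k\otimes B_k)X(A_k\otimes B_k)^\dagger$. Take a separable input $\sigma_{AB}=\sum_m q_m\,\alpha_m\otimes\beta_m$ with $q_m\ge 0$ and $\alpha_m,\beta_m$ states. Then
\begin{equation}
\Lambda_r(\sigma_{AB})=\sum_{m,k} q_m\,(A_k\alpha_mA_k^\dagger)\otimes(B_k\beta_mB_k^\dagger).
\end{equation}
Each factor is positive semidefinite, so every nonzero term is a nonnegative multiple of a product state. Dividing by $p_r>0$ therefore gives a convex combination of product states, so $\sigma^{(r)}_{AB}\in\mathrm{SEP}$. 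As a remark, a separable Kraus decomposition also makes each branch PPT-preserving. The reason is that $\Gamma_B[(A\otimes B)X(A\otimes B)^\dagger]=(A\otimes \bar B)\,\Gamma_B(X)\,(A\otimes\bar B)^\dagger$, so $\Phi_r$ again has Kraus form and is positive. Hence the first hypothesis implies the second, and the second part applies to the Pauli and Weyl constructions.

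\emph{Average monotonicity.} I will use the variational form of the trace norm: for Hermitian $Y$, $\|Y\|_1=\min\{\operatorname{Tr}Y_++\operatorname{Tr}Y_- : Y=Y_+-Y_-,\ Y_\pm\ge 0\}$. Take the Jordan decomposition $\rho_{AB}^{\Gamma_B}=Y_+-Y_-$, with $\operatorname{Tr}Y_-=\mathcal N(\rho_{AB})$ and $\operatorname{Tr}Y_+=1+\mathcal N(\rho_{AB})$. For each $r$, I will use the identity
\begin{equation}
\Lambda_r(\rho_{AB})^{\Gamma_B}=\Phi_r(\rho_{AB}^{\Gamma_B})=\Phi_r(Y_+)-\Phi_r(Y_-).
\end{equation}
Since $\Phi_r$ is positive, both $\Phi_r(Y_\pm)\ge 0$. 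This gives
\begin{equation}
p_r\|(\rho^{(r)}_{AB})^{\Gamma_B}\|_1\le \operatorname{Tr}\Phi_r(Y_+)+\operatorname{Tr}\Phi_r(Y_-).
\end{equation}
Partial transposition preserves trace, so $\operatorname{Tr}\Phi_r(Y)=\operatorname{Tr}\Lambda_r(Y^{\Gamma_B})$. Summing over $r$ and using that $\Lambda=\sum_r\Lambda_r$ is trace preserving, I get $\sum_r\operatorname{Tr}\Phi_r(Y_\pm)=\operatorname{Tr}Y_\pm$.

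\emph{The main obstacle} is bookkeeping rather than a conceptual gap. First, the outcomes with $p_r=0$ must be handled: there $\Lambda_r(\rho_{AB})=0$, so the same bound holds trivially and such terms can be dropped. Second, the linearity of $\Lambda_r$ must be extended to the Hermitian, non-positive operators $Y_\pm^{\Gamma_B}$, which is immediate.

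Combining the steps,
\begin{equation}
\sum_r p_r\|(\rho_{AB}^{(r)})^{\Gamma_B}\|_1\le \operatorname{Tr}Y_++\operatorname{Tr}Y_-=\|\rho_{AB}^{\Gamma_B}\|_1.
\end{equation}
Subtracting $\sum_r p_r=1$ and halving yields $\sum_r p_r\mathcal N(\rho^{(r)}_{AB})\le\mathcal N(\rho_{AB})$. The per-outcome bound $p_r\mathcal N(\rho^{(r)}_{AB})\le\mathcal N(\rho_{AB})$ then follows because every summand is nonnegative.
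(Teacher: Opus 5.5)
Your proof is correct and follows the paper's argument. The separability step is the same product-Kraus computation, and the monotonicity step is the same Jordan decomposition of $\rho_{AB}^{T_B}$ pushed through the positive maps $\Phi_r$ using trace preservation of $\sum_r\Lambda_r$; the paper simply bundles your per-outcome sum into one flagged map $\widetilde{\Phi}=\sum_r|r\rangle\langle r|\otimes\Phi_r$, and your identity $\Gamma_B[(A\otimes B)X(A\otimes B)^\dagger]=(A\otimes\bar B)\Gamma_B(X)(A\otimes\bar B)^\dagger$ explicitly supplies the implication from separable Kraus decompositions to PPT preservation that the paper only asserts.
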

For the Pauli conditional maps considered above, Lemma~\ref{Lemma1} also gives the average bound $\sum_r p_r\mathcal N(\rho_{AB}^{(r)})\le \mathcal N(\rho_{AB})$; the reported gain therefore arises from selection among the control outcomes of these separability-preserving conditional maps rather than from the creation of entanglement from a separable input. \cref{app_choi} gives a channel-level example in the Choi representation \cite{CHOI1975285, JAMIOLKOWSKI1972275} for single-qubit channels, and \cref{App_robust} examines inputs with different Schmidt coefficients.

\subsection{Postselection success probability}

The postselection success probability quantifies the resource cost of postselection in the ICO protocol: accepting an outcome of probability $P$ requires $1/P$ independent runs on average. Throughout this section we use the trace-nonincreasing conditional map $\Lambda_{\pm,\phi}$, the unnormalized conditional state $\rho_f^{\pm,\phi}$, and the success probability $P_{\pm,\phi}$ introduced in \cref{sec2}.

According to the definitions in \cref{subsec2-C}, \cref{fig:probabilities} shows the phase diagram of the success probability $P^U_{\rm opt}(\boldsymbol{\lambda})$ of the outcome that attains $\mathcal N_{\mathrm{ICO}}^{U}$. Throughout the scans reported in this section, the maximum is attained by the $(+)$ outcome, so $P^U_{\rm opt}=P_{+,0}$. For two global depolarizing channels on $AB$, $P^U_{\rm opt}$ is no smaller than $1/2$ over the whole scanned parameter region, both for $U=I$ and for an inserted $U=\sigma_z\otimes \sigma_z$. In the asymmetric regime shown in \cref{fig:probabilities}(b), where one channel approaches the completely depolarizing limit while the other remains weakly noisy, the success probability approaches its minimum value of $1/2$ for obtaining the optimal conditional state. This regime overlaps with the ICO-only region, where the selected conditional output remains NPT even under near-extreme noise. This observation indicates that the extension of the certifiable entanglement region of the conditional state comes at the cost of a reduced postselection probability, compatible with the probability-weighted monotonicity in \cref{Lemma1}.
\begin{figure}[htbp]
\centering
\includegraphics[width=0.5\textwidth]{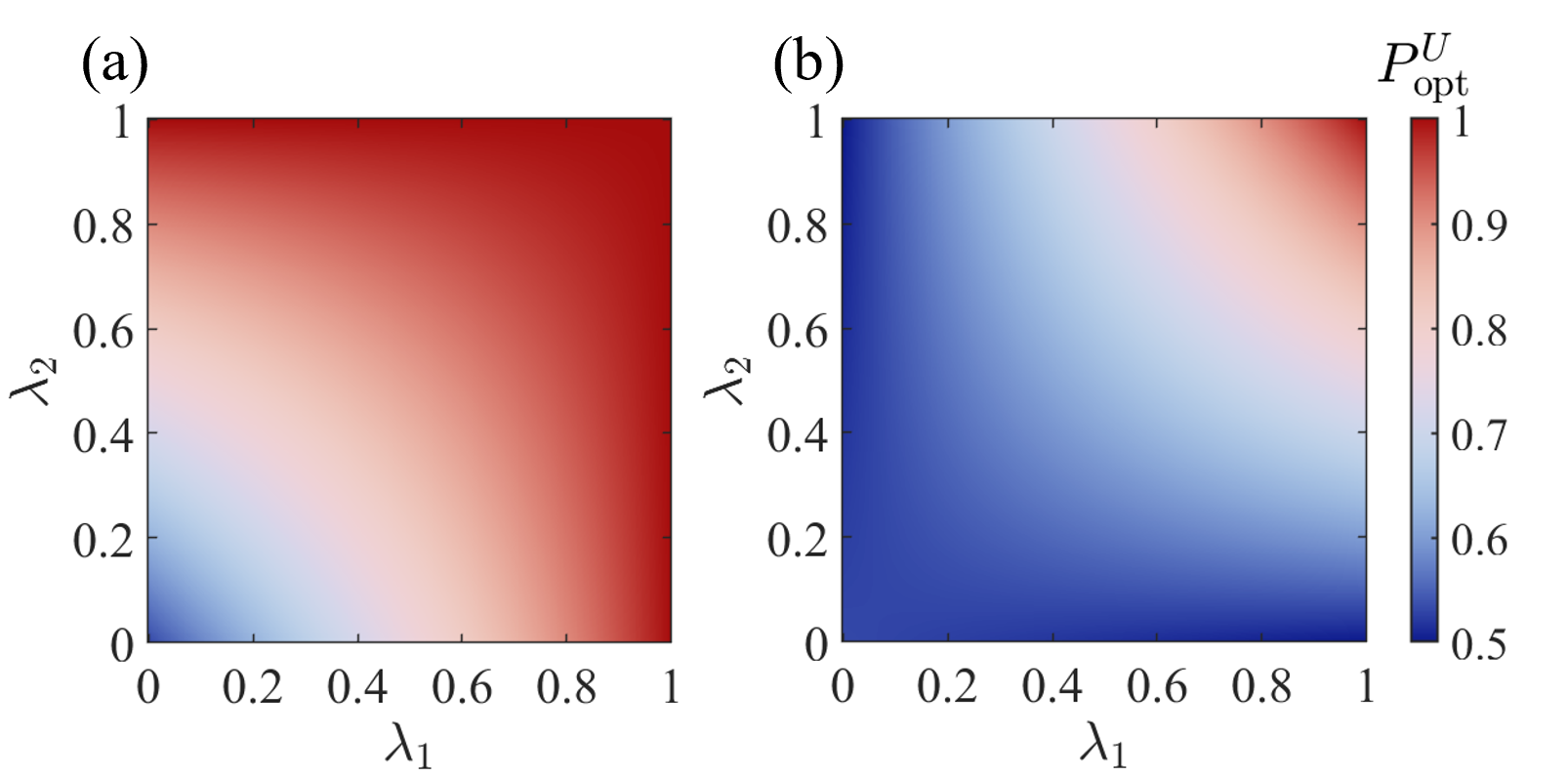}
\caption{
Postselection probability for the optimal conditional output state under two global depolarizing channels on $AB$. The probability is plotted as a function of the channel parameters $\lambda_1$ and $\lambda_2$. Panels (a) and (b) correspond to the cases without local operations and with the local operation $\sigma_z\otimes \sigma_z$, respectively. The blue-to-red color scale indicates an increasing probability for the control outcome that attains $\mathcal N_{\mathrm{ICO}}^{U}$.}
\label{fig:probabilities}
\end{figure}

The success probabilities in \cref{fig:probabilities} follow directly from the $X$-basis decomposition. \cref{eq:Q-rho} and \cref{eq:conditional-Q} give $P_{-,0}=\operatorname{Tr}[Q_{\rho}(U)]/4$ and $P_{+,0}=1-\operatorname{Tr}[Q_{\rho}(U)]/4$, so $\operatorname{Tr}[Q_{\rho}(U)]$ determines the probability distribution between the two conditional branches. By contrast, $\mathcal{I}(U)=\lVert Q_{\rho}(U)\rVert_{\mathrm{HS}}$ quantifies the overall magnitude of the path difference contribution. These quantities therefore characterize complementary aspects of causal-order interference: its postselection statistics and its redistribution of contributions between the conditional outputs. This distinction underlies the analysis below of the relation between path difference and conditional-state negativity under general local unitaries.

\subsection{General local unitaries and path difference under depolarizing noise}

For general local unitaries, the two ordered products associated with a Pauli path need not differ only by a sign, and the conditional ICO Kraus operators need not remain product operators. Because separability preservation is then not guaranteed, the scans below compare output-state negativities rather than certify input entanglement, with $\mathcal I(U)$ used only as an input-dependent path-difference indicator.

\subsubsection{Negativity under general local unitary operations}

We parametrize $U=U^{(A)}\otimes U^{(B)}$, where $U^{(A)}$ and $U^{(B)}$ are two single-qubit operations, namely $U^{(k)}=\exp(-i\alpha^{(k)}\hat{n}_k\cdot \hat{\sigma})$, $k\in \{A,B\}$. Each single-qubit operation can be expanded in the Pauli basis,
\begin{equation}
U^{(k)}=u_{I}^{(k)}I+u_{X}^{(k)}X+u_{Y}^{(k)}Y+u_{Z}^{(k)}Z,
\end{equation}
where
\begin{equation}
\begin{aligned}
u_{I}^{(k)}&= \cos(\alpha^{(k)}),\\
u_{X}^{(k)}&= -i\sin(\alpha^{(k)})\sin(\theta^{(k)})\cos(\phi^{(k)}),\\
u_{Y}^{(k)}&= -i\sin(\alpha^{(k)})\sin(\theta^{(k)})\sin(\phi^{(k)}),\\
u_{Z}^{(k)}&= -i\sin(\alpha^{(k)})\cos(\theta^{(k)}),
\end{aligned}
\end{equation}
with $\alpha^{(k)}\in [0,\pi],\theta^{(k)}\in [0,\pi],\phi^{(k)}\in [0,2\pi]$ and $k\in \{A,B\}$. For simplicity, the two channels have the same identity Kraus weight $p_1=p_2=p$, corresponding to $\lambda_1=\lambda_2=(16p-1)/15$, and we set $\alpha^{(A)}=\alpha^{(B)}=\alpha$, $\theta^{(A)}=\theta^{(B)}=\theta$, and $\phi^{(A)}=\phi^{(B)}=\phi_0$. In each scan, we fix $p=0.8$ and two angular parameters, and vary the remaining parameter to examine the dependence of the conditional-state negativity. Since the depolarizing channel is unitarily covariant, the inserted local unitary can be commuted through the subsequent channel in either definite order and then acts as a local rotation on the output. This leaves the negativity across $A|B$ unchanged, so the negativity of either definite-order output does not depend on the inserted local unitary.
\begin{figure}[htbp]
\centering
\includegraphics[width=0.48\textwidth]{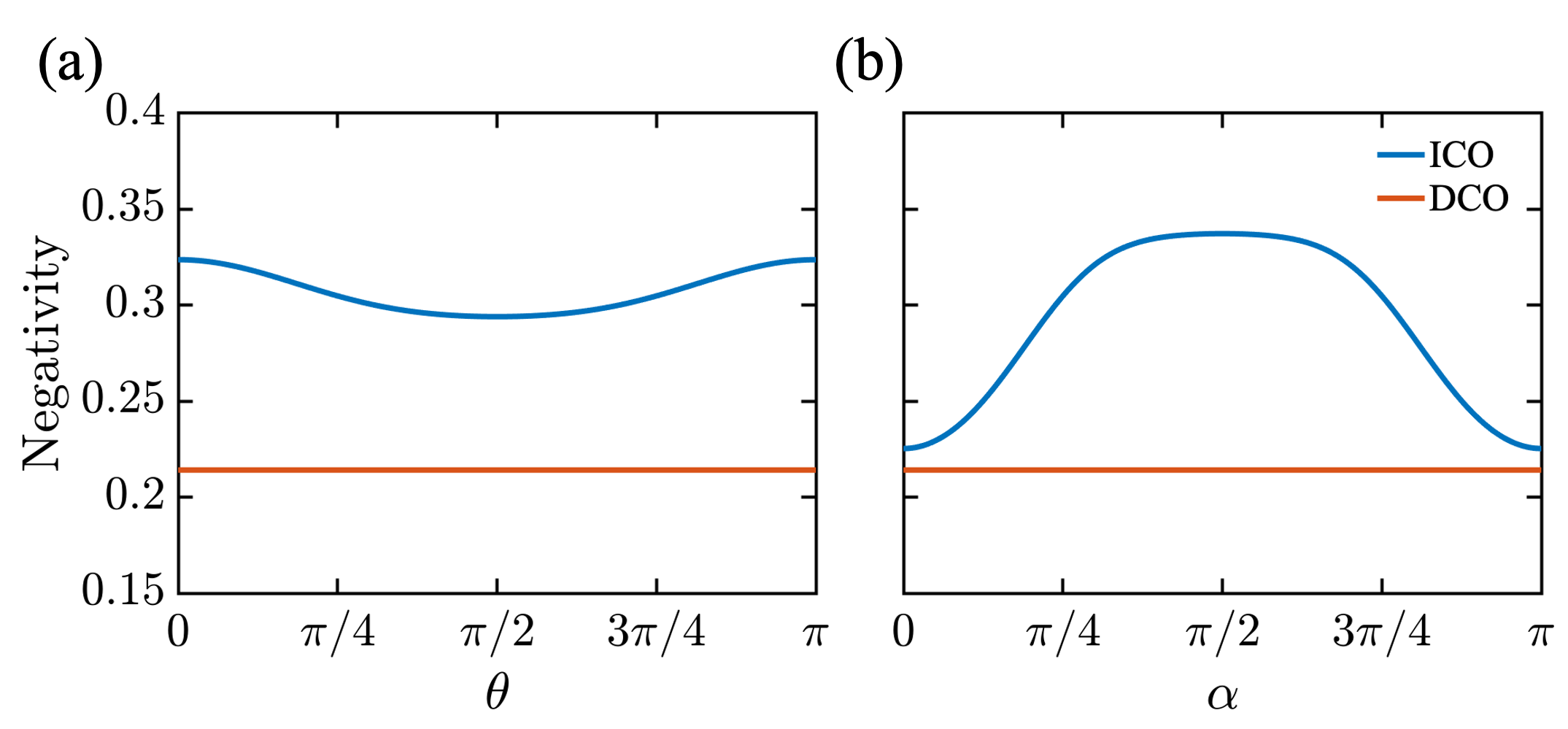}
\caption{Dependence of the negativity of the conditional state on local unitary operations under two depolarizing channels. We take equal identity Kraus weights $p_1=p_2=0.8$, corresponding to $\lambda_1=\lambda_2\approx 0.787$, and $\alpha^{(A)}=\alpha^{(B)}=\alpha$, $\theta^{(A)}=\theta^{(B)}=\theta$, $\phi^{(A)}=\phi^{(B)}=\phi_0$. Panels (a)--(b) show the dependence of negativity on $\theta$ and $\alpha$, respectively, with the remaining two parameters fixed at $\pi/4$. Orange and blue curves denote the DCO output and the $(+)$ ICO conditional output, respectively. The $(+)$ outcome attains $\mathcal N_{\mathrm{ICO}}^{U}$ and has larger negativity than the DCO output throughout the plotted parameter ranges.}
\label{fig:general-local-unitary}
\end{figure}
The flat DCO curves in \cref{fig:general-local-unitary} are consistent with this covariance argument. For the postselected quantum-switch output, the inserted local unitary changes the conditional-state negativity; throughout the plotted scans, this negativity remains above the definite-order value.

\subsubsection{Correlation between path difference and conditional-state negativity gain}
\label{non_commu}

The $X$-basis conditional outputs make explicit how operator noncommutativity enters through the path-difference operator. As shown in \cref{eq:conditional-Q}, the $(-)$ outcome extracts the component $\mathcal Q_\rho(U)/4$, while the $(+)$ outcome contains the complementary part of the equally weighted mixture, $\rho_f^{+,0}=\rho_f^{\rm MIX}-\mathcal Q_\rho(U)/4$. The quantity $\mathcal I(U)$ therefore quantifies the overall magnitude of the component redistributed between the two outcomes by causal-order interference.

Using the parameters of \cref{fig:general-local-unitary}, we fix $p=0.8$ and two angular parameters at $\pi/4$, vary the remaining parameter, and compare the negativity gain $\Delta_\mathcal{N}^U$ with $\mathcal I(U)$ in \cref{fig:9}. Over these restricted scans, the two quantities show a clear qualitative correlation, with larger values of $\mathcal I(U)$ generally accompanying larger conditional-state negativity gains.

\begin{figure}[htbp]
\centering
\includegraphics[width=0.48\textwidth]{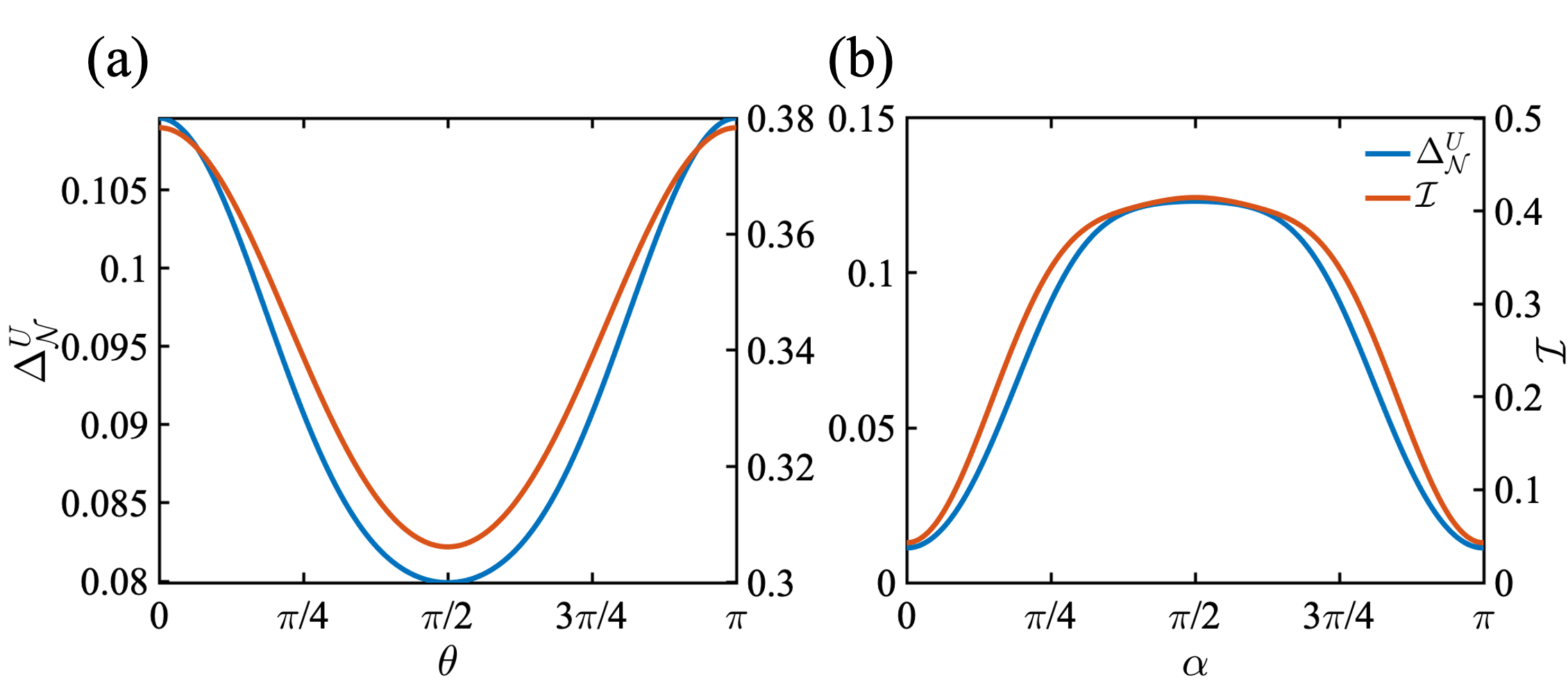}
\caption{
Conditional-state negativity gain versus path difference under local unitary operations. The blue curves denote the negativity gain $\Delta_\mathcal N^U$, and the orange curves denote $\mathcal I(U)$. We set equal identity Kraus weights $p_1=p_2=p=0.8$, corresponding to $\lambda_1=\lambda_2\approx 0.787$, and $\alpha^{(A)}=\alpha^{(B)}=\alpha$, $\theta^{(A)}=\theta^{(B)}=\theta$, $\phi^{(A)}=\phi^{(B)}=\phi_0$. Panels (a)--(b) show the dependence on $\theta$ and $\alpha$, respectively, with the remaining two angular parameters fixed at $\pi/4$.
}
\label{fig:9}
\end{figure}

\cref{spec} explains this correlation through the partial-transpose spectrum. There, $\mathcal I(U)$ bounds the spectral displacement available from $\mathcal Q_\rho(U)$, whereas an NPT crossing depends on the projection of $\mathcal Q_\rho(U)^{T_B}$ onto the relevant detection directions and on the background spectrum of the mixed output. These spectral and directional constraints account for the qualitative correlation observed in the present depolarizing family and local-unitary scans.

\section{Entanglement certification under non-Pauli noise}

For non-Pauli noise, the relation $H_{ij}^{\rightarrow}=\pm H_{ij}^{\leftarrow}$ generally no longer holds, and the control measurement no longer performs an exact binary path selection. The two ordered products of each path pair still add coherently in the conditional Kraus operators of \cref{eq:M_ij}, and the cross-order terms in \cref{eq:final_state} still redistribute the noise contributions between the two conditional outputs. The comparison between the ICO conditional states and arbitrary classical mixtures of the two definite orders therefore proceeds exactly as before.

We illustrate this case with two amplitude damping channels (ADCs), each acting locally and independently on $A$ and $B$ with noise parameter $\gamma_k$ for $k=1,2$. The input is $\rho_{AB}=|\Phi^+\rangle\langle\Phi^+|$, the inserted local operation is fixed to the identity $U=I_A\otimes I_B$, and the control qubit is measured in the $X$ basis. Each channel is a tensor product of single-qubit amplitude damping Kraus operators,
\begin{equation}
\mathcal E_{\mathrm{ADC}}^{(2q)}(\rho)
= \sum_{a,b\in \{0,1\}}
(E_a\otimes E_b)\rho(E_a\otimes E_b)^\dagger,
\end{equation}
\begin{equation}
E_0=
\begin{pmatrix}
1&0\\
0&\sqrt{1-\gamma}
\end{pmatrix},
\quad E_1=
\begin{pmatrix}
0&\sqrt{\gamma}\\
0&0 
\end{pmatrix},
\quad \gamma\in [0,1].
\end{equation}
With the identity insertion, the composition of the two ADCs is again an amplitude damping channel with effective parameter $\gamma_{\rm eff}=1-(1-\gamma_1)(1-\gamma_2)$ regardless of the order, so the two definite-order outputs and all their classical mixtures coincide. Direct multiplication further shows that the two ordered products of each single-qubit Kraus pair are proportional, for example $E_0^{(2)}E_1^{(1)}=\sqrt{\gamma_1} |0\rangle\langle 1|$ while $E_1^{(1)}E_0^{(2)}=\sqrt{1-\gamma_2} E_0^{(2)}E_1^{(1)}$, so every nonvanishing conditional Kraus operator remains a product operator with respect to $A|B$. The conditional maps therefore admit separable Kraus decompositions and Lemma~\ref{Lemma1} applies to this ADC setting.

\cref{fig:ADCs} shows the negativity gain $\Delta_{\mathcal N}^{I}(\gamma_1,\gamma_2)=\mathcal N(\widehat\rho_f^{+,0})-\mathcal N_{\mathrm{cl}}^{I}$ under the two ADCs, where the negativity of the conditional state associated with the $(+)$ outcome is the larger of the two at every sampled point. The gain is positive at every sampled interior point of the parameter grid and vanishes on the boundary, where one of the channels becomes noiseless or completely damping. However, the resulting gain is relatively small, and the NPT-certifiable regions of the ICO and DCO protocols are largely comparable. This example shows that exact Pauli sign selection is not necessary for a postselected negativity gain.
\begin{figure}[htbp]
\centering
\includegraphics[width=0.33\textwidth]{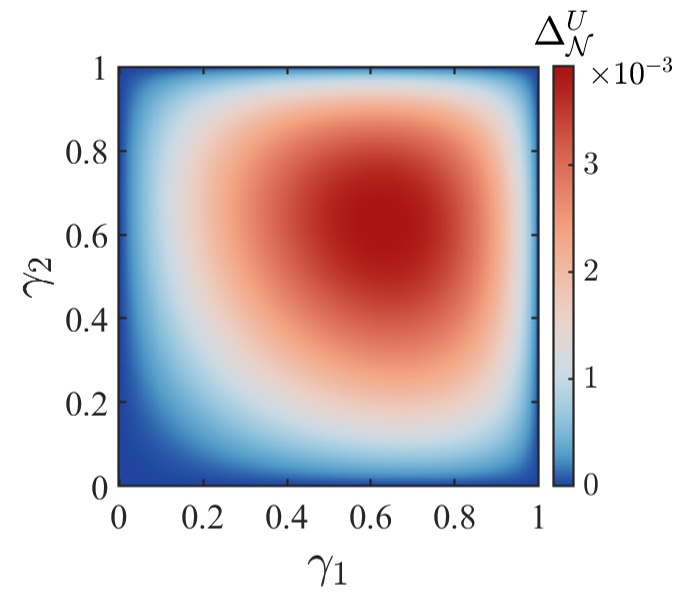}
\caption{Conditional-state negativity gain $\Delta_{\mathcal N}^{I}(\gamma_1,\gamma_2)=\mathcal N(\widehat\rho_f^{+,0})-\mathcal N_{\mathrm{cl}}^{I}$ for two successive amplitude damping channels acting locally on $A$ and $B$, with noise parameters $\gamma_1$ and $\gamma_2$ and $U=I$.}
\label{fig:ADCs}
\end{figure}

\section{High-dimensional entanglement certification}\label{sec5}

We next consider two-qutrit systems, first using negativity to certify NPT entanglement and then using nondecomposable witnesses to detect PPT entanglement.

\subsection{High-dimensional NPT certification under Weyl noise}
\label{subsec5_A}

For the qudit random unitary channels considered here, tensor-product Pauli operators are replaced by tensor-product Weyl operators. When the channel Kraus operators and the inserted unitary are Weyl operators, the two ordered products of each path pair differ by a phase rather than only by a sign. The path weights therefore become phase dependent, as stated in Corollary~\ref{cor:weyl-phase}.

\begin{corollary}
\label{cor:weyl-phase}
Suppose that the two causal orders of a path pair differ only by a phase, $H_{ij}^{\rightarrow}=e^{i\vartheta_{ij}}H_{ij}^{\leftarrow}$, as occurs when the channel Kraus operators and the inserted operation are Weyl operators. Then
\begin{equation}
M_{ij}^{\pm,\phi}
= \frac{1}{2}
\left( e^{i\vartheta_{ij}}\pm e^{-i\phi} \right)
H_{ij}^{\leftarrow}.
\end{equation}
The corresponding path weight is proportional to $\{1\pm \cos(\vartheta_{ij}+\phi)\}/2$. Hence, the $(+)$ outcome is constructive for this path when $\phi=-\vartheta_{ij}$, whereas the $(-)$ outcome is constructive when $\phi=\pi-\vartheta_{ij}$.
\end{corollary}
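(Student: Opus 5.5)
The plan mirrors the proof of Theorem~\ref{thm:pauli-parity}, with the sign $(-1)^{s_{ij}(U)}$ replaced by a general phase $e^{i\vartheta_{ij}}$. There are three steps: justify the phase relation for Weyl operators, substitute it into \cref{eq:M_ij}, and compute the squared modulus of the resulting scalar coefficient.

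\emph{Phase relation.} For the parenthetical claim, take $K_i^{(1)}=\sqrt{p_i}W_{a_i}$, $K_j^{(2)}=\sqrt{q_j}W_{b_j}$ and $U=W_u$. Here the $W$'s are tensor products of Weyl operators $X^kZ^l$. They satisfy $W_aW_b=\omega^{\langle a,b\rangle}W_bW_a$ with $\omega=e^{2\pi i/d}$ and $\langle\cdot,\cdot\rangle$ the symplectic form. Moving $W_{b_j}$ and $W_{a_i}$ past each other and past $W_u$ then gives
\begin{equation}
W_{b_j}W_uW_{a_i}=\omega^{c_{ij}}\,W_{a_i}W_uW_{b_j}
\end{equation}
for an integer $c_{ij}$. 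The two products are nonzero multiples of the same operator because they contain the same generators. Therefore $H_{ij}^{\rightarrow}=e^{i\vartheta_{ij}}H_{ij}^{\leftarrow}$ with $\vartheta_{ij}=2\pi c_{ij}/d$. This is the only place where any structure is needed. Everything after it uses only the assumed phase relation.

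\emph{Conditional Kraus operators.} Substituting $H_{ij}^{\rightarrow}=e^{i\vartheta_{ij}}H_{ij}^{\leftarrow}$ into \cref{eq:M_ij} and factoring out $H_{ij}^{\leftarrow}$ gives the stated $M_{ij}^{\pm,\phi}$ directly.

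\emph{Path weight.} The pair $(i,j)$ contributes $M_{ij}^{\pm,\phi}\rho M_{ij}^{\pm,\phi\dagger}$ to \cref{eq:Lambda}. This equals $|c_\pm|^2\,H_{ij}^{\leftarrow}\rho H_{ij}^{\leftarrow\dagger}$, where $c_\pm=(e^{i\vartheta_{ij}}\pm e^{-i\phi})/2$. Expanding the modulus,
\begin{equation}
|c_\pm|^2=\tfrac14\left(2\pm 2\,\mathrm{Re}\,e^{i(\vartheta_{ij}+\phi)}\right)=\frac{1\pm\cos(\vartheta_{ij}+\phi)}{2}.
\end{equation}
This is the claimed weight, multiplying the ordered-path contribution $H_{ij}^{\leftarrow}\rho H_{ij}^{\leftarrow\dagger}$. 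Since $H_{ij}^{\leftarrow}$ and $H_{ij}^{\rightarrow}$ differ only by a phase, either one can be used, and the overall phase of $c_\pm$ drops out.

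\emph{Constructive conditions.} The $(+)$ weight equals $1$ exactly when $\cos(\vartheta_{ij}+\phi)=1$, that is, at $\phi=-\vartheta_{ij}$ modulo $2\pi$. The $(-)$ weight equals $1$ exactly when $\cos(\vartheta_{ij}+\phi)=-1$, that is, at $\phi=\pi-\vartheta_{ij}$. At these values the complementary outcome receives zero weight for that pair. This recovers Theorem~\ref{thm:pauli-parity} when $d=2$ and $\vartheta_{ij}\in\{0,\pi\}$.

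The only step with any real content is the first one: checking that the accumulated commutation phases are well defined, so that the two products coincide up to a phase rather than being different Weyl operators. I would handle this by using the fact that the product of Weyl operators is, up to a phase, determined by the sum of their labels, and both orderings have the same label sum $a_i+u+b_j$.
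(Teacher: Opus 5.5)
Your proposal is correct and follows the same route the paper takes implicitly: substitute $H_{ij}^{\rightarrow}=e^{i\vartheta_{ij}}H_{ij}^{\leftarrow}$ into the conditional Kraus operators, exactly as in the proof of Theorem~\ref{thm:pauli-parity}, and evaluate $|e^{i\vartheta_{ij}}\pm e^{-i\phi}|^2/4=[1\pm\cos(\vartheta_{ij}+\phi)]/2$. Your derivation of the phase relation from the Weyl commutation rule $W_aW_b=\omega^{\langle a,b\rangle}W_bW_a$ is sound, and it fills in a step the paper only asserts.
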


Consequently, every nonvanishing conditional Kraus operator remains proportional to a product Weyl operator. The conditional maps therefore admit separable Kraus decompositions, and Lemma~\ref{Lemma1} applies as in the Pauli case. Because the phases $\vartheta_{ij}$ can take more than two values, however, a qubit control cannot in general sort all Weyl paths into distinct outcomes; complete phase sorting would require a higher-dimensional or multipath control system.

Using the PPT criterion and entanglement negativity, we quantify the resulting NPT-certification performance. We consider a maximally entangled two-qutrit state $|\psi\rangle=\frac{1}{\sqrt{3}}(|00\rangle+|11\rangle+|22\rangle)$ subject to two depolarizing channels acting on the joint two-qutrit system. Their Kraus operators are built from product Weyl operators,
\begin{equation}
K_{mn,m'n'}=
\begin{cases}
\sqrt p I_9,& m=n=m'=n'=0,\\
\sqrt{\frac{1-p}{d^4-1}} D_{mn}\otimes D_{m'n'},& \text{otherwise}.
\end{cases}
\label{eq:joint-qutrit-weyl-channel}
\end{equation}
Here $d=3$, $p=[(d^4-1)\lambda+1]/d^4$, $\lambda\in [0,1]$, and $D_{mn}=X^mZ^n$ with $m,n\in \{0,1,2\}$, where
\begin{equation}
X|k\rangle=|(k+1)\operatorname{mod}3\rangle,\qquad
Z|k\rangle=\omega^k|k\rangle,\qquad
\omega=e^{i2\pi/3}.
\end{equation}
\cref{fig:3D} shows the phase diagram of $\Delta_{\mathcal N}^{U}$ and the NPT-certifiable region for $U=D_{20}\otimes D_{20}$. Throughout the scanned noise-parameter space, the fixed $(+)$ branch has negativity no smaller than that of either definite-order output. As in the two-qubit example with $U=\sigma_z\otimes \sigma_z$, the quantum-switch branch remains NPT over a broader parameter region than the definite-order outputs.
\begin{figure}[htbp]
\centering
\includegraphics[width=0.33\textwidth]{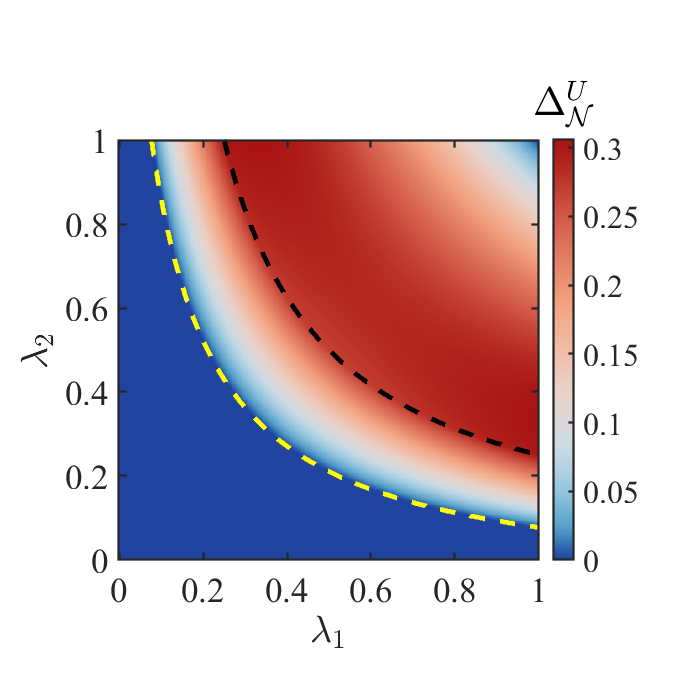}
\caption{Negativity gain $\Delta_{\mathcal N}^{U}$ with $U=D_{20}\otimes D_{20}$ for a maximally entangled two-qutrit state under two depolarizing channels acting on the joint two-qutrit system, with depolarizing parameters $\lambda_1$ and $\lambda_2$. The color scale shows the negativity gain of the fixed $(+)$ outcome relative to the corresponding definite-order outputs. The yellow and black dashed curves mark the ICO and DCO NPT-certification boundaries, respectively, so the region between them is certified only by the ICO protocol.}
\label{fig:3D}
\end{figure}

In $3\times 3$ and higher-dimensional systems, however, PPT no longer implies separability, and negativity can certify only NPT entanglement. The preceding high-dimensional analysis therefore concerns the enhancement of NPT entanglement certification. We next turn to entanglement witnesses to examine PPT entanglement.

\subsection{ICO enhancement of entanglement witness detection}

For a fixed entanglement witness $W$, linearity in the mixing parameter $\xi$ of \cref{eq:mix} implies that the minimum of $\operatorname{Tr}[W\rho_f^{\rm MIX}(\xi)]$ over all classical mixtures is attained at an endpoint,
\begin{equation}
\min_{\xi\in [0,1]}\operatorname{Tr}[W\rho_f^{\rm MIX}(\xi)]
=\min\{\operatorname{Tr}(W\rho_f^{\rightarrow}),\operatorname{Tr}(W\rho_f^{\leftarrow})\}.
\label{eq:witness-classical-benchmark}
\end{equation}
By \cref{eq:witness-classical-benchmark}, the witness expectation is nonnegative for every classical mixture if and only if it is nonnegative for both definite-order outputs. Thus, none of these outputs is detected by this witness. For the ICO protocol, the witness expectation of the unnormalized conditional output is
\begin{equation}
\operatorname{Tr}
[ W\rho_f^{\pm,\phi} ]
= \frac{1}{4}
\sum_{i,j} \operatorname{Tr}
[ W(H_{ij}^{\rightarrow}\pm e^{-i\phi}H_{ij}^{\leftarrow})
\rho_{AB}
(H_{ij}^{\rightarrow}\pm e^{-i\phi}H_{ij}^{\leftarrow})^\dagger ].
\end{equation}
Define the diagonal witness contribution as
\begin{equation}
D_W(\rho_{AB} )
= \frac{1}{4}
\sum_{i,j} ( \operatorname{Tr}[WH_{ij}^{\rightarrow}\rho_{AB}
(H_{ij}^{\rightarrow})^\dagger]
+\operatorname{Tr}[WH_{ij}^{\leftarrow}\rho_{AB}
(H_{ij}^{\leftarrow})^\dagger] ),
\end{equation}
and the witness-weighted cross-order interference term as
\begin{equation}
C_W(\rho_{AB}
,U)
= \sum_{i,j} \operatorname{Tr}
[ WH_{ij}^{\leftarrow}\rho_{AB}
(H_{ij}^{\rightarrow})^\dagger ].
\end{equation}
It follows that
\begin{equation}
\operatorname{Tr}[W\rho_f^{\pm,\phi}]
= D_W(\rho_{AB} )
\pm \frac{1}{2}
\operatorname{Re}
[ e^{-i\phi}C_W(\rho_{AB}
,U) ].
\end{equation}
Thus, cross-order interference enters the witness expectation through $C_W$. If the control outcome and phase can be optimized, the minimum unnormalized witness expectation for the chosen witness $W$ is
\begin{equation}
\min_{\pm,\phi}
\operatorname{Tr}[W\rho_f^{\pm,\phi}]
= D_W(\rho_{AB}
)-\frac{1}{2}|C_W(\rho_{AB}
,U)|.
\end{equation}
If the minimum in \cref{eq:witness-classical-benchmark} is nonnegative and $D_W(\rho_{AB})-|C_W(\rho_{AB},U)|/2<0$, then the same witness gives nonnegative expectations for both definite-order outputs and every classical mixture of them, but a negative expectation for one ICO conditional output. The negativity-based analysis of the preceding sections corresponds to the special choice $W_\eta=(|\eta\rangle\langle\eta|)^{T_B}$, associated with a negative-eigenvalue direction of the partial transpose. In the next subsection, we carry out this comparison using a nondecomposable witness, which is required for certifying PPT entangled states.

\subsection{PPT entanglement certification with the Tiles UPB state}
\label{subsec5_C}

We consider the $3\times 3$ Tiles UPB bound-entangled state
\begin{equation}
\rho_{\mathrm{UPB}}
= \frac{I_9-\Pi_{\mathrm{UPB}}}{4},
\end{equation}
where $\Pi_{\mathrm{UPB}}$ denotes the projector onto the subspace spanned by the Tiles UPB product vectors. In \cref{App_PPT}, we derive a rigorous lower bound
$
\min_{\|a\|=\|b\|=1}
\langle a,b|\Pi_{\mathrm{UPB}}|a,b\rangle
> \frac{1}{45}
$ and construct the following valid entanglement witness~\cite{PhysRevA.66.062305,LewensteinOptimizationOfPRA2000, 01042003, TERHAL200161}
\begin{equation}
W_{\mathrm{UPB}}=\Pi_{\mathrm{UPB}}-\epsilon I_9,
\qquad \epsilon=1/45.
\end{equation}
For the Weyl channels defined in \cref{eq:joint-qutrit-weyl-channel}, with the fixed local operation $U=D_{20}\otimes D_{20}$ and parameters $(p_1,p_2)=(0.97,0.97)$, we compute the outputs generated from the Tiles UPB input, namely the definite-order outputs $\rho_f^{\rightarrow}$ and $\rho_f^{\leftarrow}$ and the normalized ICO $(+)$-conditioned output $\widehat\rho_f^{+,0}$. Since the uniform mixture of all two-qutrit Weyl products is the completely depolarizing channel, each channel in \cref{eq:joint-qutrit-weyl-channel} acts as $\mathcal E_i(\rho)=\lambda_i\rho+(1-\lambda_i)I_9/9$, with $\lambda_1=\lambda_2=0.969625$ at the chosen parameters. Direct composition then gives $\rho_f^{\rightarrow}=\rho_f^{\leftarrow}=\lambda_1\lambda_2U\rho_{\mathrm{UPB}}U^\dagger+(1-\lambda_1\lambda_2)I_9/9$, so the two definite-order outputs coincide and every classical mixture of them equals the same state.

The candidate witnesses are drawn from the locally rotated witness family
\begin{equation}
\mathcal W_{\mathrm{loc}}
= \{(V_A\otimes V_B)W_{\mathrm{UPB}}(V_A\otimes V_B)^\dagger\},
\end{equation}
where $V_A,V_B\in U(3)$. Every member of $\mathcal W_{\mathrm{loc}}$ is a valid nondecomposable witness, as shown in \cref{sec:AnalyticWitness}. For the definite-order outputs, no numerical optimization is required, because the entire witness family obeys the following analytic bound. Writing $W_V=VW_{\mathrm{UPB}}V^\dagger$ with $V=V_A\otimes V_B$ and using $\operatorname{Tr}\Pi_{\mathrm{UPB}}=5$, the common definite-order output gives
\begin{equation}
\begin{aligned}
\operatorname{Tr}(W_V\rho_f^{\rightarrow})
&= \lambda_1\lambda_2\operatorname{Tr}(V\Pi_{\mathrm{UPB}}V^\dagger U\rho_{\mathrm{UPB}}U^\dagger)\\
&\quad +\frac{5(1-\lambda_1\lambda_2)}{9}-\frac{1}{45}\\
&\ge \frac{5(1-\lambda_1\lambda_2)}{9}-\frac{1}{45},
\end{aligned}
\label{eq:DcoFamilyBound}
\end{equation}
which evaluates to approximately $1.1015\times 10^{-2}$ at the chosen parameters. The first term vanishes exactly at $V=U$, since $\operatorname{Tr}(\Pi_{\mathrm{UPB}}\rho_{\mathrm{UPB}})=0$, so this bound is the global minimum over the family. Hence no witness in the continuous family $\mathcal W_{\mathrm{loc}}$ detects the definite-order outputs or any classical mixture of them.

For the ICO conditional output, we evaluate the family member that attains equality in \cref{eq:DcoFamilyBound}, namely $W_U=UW_{\mathrm{UPB}}U^\dagger$. This single witness gives $\operatorname{Tr}(W_U\widehat\rho_f^{+,0})\approx -5.9319\times 10^{-3}$ and $\operatorname{Tr}(W_U\rho_f^{\rightarrow})=\operatorname{Tr}(W_U\rho_f^{\leftarrow})\approx 1.1015\times 10^{-2}$. The same witness therefore certifies the ICO conditional output as entangled while giving a strictly positive value on both definite orders and on every classical mixture of them.

The success probability of the $(+)$ outcome is approximately $0.9701$. The negativity of $\widehat\rho_f^{+,0}$ remains zero and the minimum eigenvalue of its partial transpose is approximately $0.0026$, so the conditional output remains PPT and the detection is not obtained by converting the output into an NPT state. At this representative point, the ICO protocol therefore certifies a PPT-entangled output, whereas no witness in the locally rotated family detects either definite-order output or any classical mixture of them.

\section{Discussion and outlook}

In this work, we have shown that ICO preprocessing can enlarge the parameter region in which a fixed entanglement criterion succeeds after noise. For the Pauli and Weyl constructions and for the amplitude-damping example with $U=I$, the selected conditional maps are separability preserving across $A|B$, so entanglement detected after postselection certifies entanglement of the input. To quantify this advantage, we used the negativity gain of the conditional state for NPT-entangled inputs and identified the region in which entanglement is certifiable only by the ICO scheme. For Pauli channels, the inserted local Pauli unitary controls the sorting of Kraus paths between the two control outcomes, and $U=\sigma_z\otimes \sigma_z$ substantially enlarges the NPT-certifiable region. The input-dependent quantity $\mathcal I(U)$ measures the path-difference magnitude and tracks the conditional negativity gain within the scanned local-unitary family. Positive gains also occur for local amplitude-damping noise and in the two-qutrit Weyl setting. At a representative Tiles UPB point, a locally rotated nondecomposable witness detects the PPT postselected ICO output, while an analytic bound excludes the entire rotated witness family on the definite-order benchmark. For general local unitaries and general channels, separability preservation has not been established, so the corresponding scans compare output-state negativities.

Our analysis has also been restricted to discrete-variable systems and to noise acting only on the target states. Further directions include higher-dimensional controls for more complete Weyl phase sorting \cite{PhysRevLett.113.250402, 39vh-84n1}, continuous-variable entanglement certification \cite{Giacomini2016}, and noisy implementations of the quantum switch itself \cite{Aziz_2025}.

\begin{acknowledgments}
This work is supported by Quantum Science and Technology-National Science and Technology Major Project (Grants No. 2024ZD0302401 and No. 2021ZD0301500), National Natural Science Foundation of China (No. 12125402, No. 12534016, and No. 12405005), and Beijing Natural Science Foundation (Grant No. Z240007). S.L. acknowledges the China Postdoctoral Science Foundation (No. 2023M740119).
\end{acknowledgments}

\appendix
\section{Choi-state analysis of single-qubit channels}
\label{app_choi}

\subsection{NPT certification in the Choi state analysis}

We use the Choi state to formulate a channel-level benchmark for the postselected quantum-switch map. Specifically, we apply the conditional map induced by the quantum switch to one half of a Bell state and quantify the entanglement of the resulting Choi state by its negativity. We prepare $|\Phi^+\rangle_{RT} = (|00\rangle+|11\rangle)/\sqrt{2}$, where the reference qubit $R$ bypasses the quantum switch while the target qubit $T$ is processed by it. Let the target qubit pass through two noisy channels $\mathcal{E}_1$ and $\mathcal{E}_2$, with the intermediate operation $U$ acting only on $T$. For control state $|0\rangle_C$, the target undergoes $H_{ij}^{\rightarrow}=K_j^{(2)}UK_i^{(1)}$; for $|1\rangle_C$, it undergoes $H_{ij}^{\leftarrow}=K_i^{(1)}UK_j^{(2)}$. In this reference--target setting, the conditional Kraus operators are still given by $M_{ij}^{\pm,\phi} = \frac{1}{2}\left(H_{ij}^{\rightarrow}\pm e^{-i\phi}H_{ij}^{\leftarrow}\right)$, which define the trace-nonincreasing conditional map on the target qubit
\begin{equation}
\Lambda_{\pm,\phi}(\rho_T)
= \sum_{i,j} M_{ij}^{\pm,\phi}\rho_T M_{ij}^{\pm,\phi\dagger}.
\end{equation}
Using the Choi representation, the normalized output state on the reference--target system is
\begin{equation}
\rho_{RT}^{\mathrm{ICO},\pm,\phi}
= \frac{
(\operatorname{id}_R\otimes \Lambda_{\pm,\phi}^{(T)})
\left(|\Phi^+\rangle\langle\Phi^+|_{RT}\right)
}{
P_{\pm,\phi}
},
\end{equation}
where $P_{\pm,\phi} = \operatorname{Tr} \left[(\operatorname{id}_R\otimes \Lambda_{\pm,\phi}^{(T)})\left(|\Phi^+\rangle\langle\Phi^+|_{RT}\right)\right]$ is the postselection success probability for the corresponding control outcome, and $\operatorname{id}_R$ denotes the identity channel on the reference system. The corresponding definite order and classical mixture benchmarks are
\begin{equation}
\rho_{RT}^{12}
= (\operatorname{id}_R\otimes \mathcal E_2\circ \mathcal U\circ \mathcal E_1)
(|\Phi^+\rangle\langle\Phi^+|_{RT}),
\end{equation}
\begin{equation}
\rho_{RT}^{21}
= (\operatorname{id}_R\otimes \mathcal E_1\circ \mathcal U\circ \mathcal E_2)
(|\Phi^+\rangle\langle\Phi^+|_{RT}),
\end{equation}
and let
\begin{equation}
\rho_{RT}^{\rm MIX}(\xi)
=\xi\rho_{RT}^{12}+(1-\xi)\rho_{RT}^{21},
\qquad \xi\in [0,1],
\end{equation}
where $\mathcal U(\rho)=U\rho U^\dagger$. Because the set of PPT states is convex, a sufficient condition for a postselection advantage is a parameter region in which
\begin{equation}
\mathcal N(\rho_{RT}^{12})=0,\qquad
\mathcal N(\rho_{RT}^{21})=0,
\end{equation}
while, for some control outcome,
\begin{equation}
P_{\pm,\phi}>0,\qquad
\mathcal N(\rho_{RT}^{\mathrm{ICO},\pm,\phi})>0.
\end{equation}
Proposition~\ref{prop:choi-advantage} gives an analytically tractable postselection advantage region in the Choi state setting.
\begin{proposition}\label{prop:choi-advantage}
Consider two standard single-qubit depolarizing channels
\begin{equation}
\mathcal D_{\lambda_k}(\tau)
= \lambda_k\tau+(1-\lambda_k)\frac{I}{2},
\qquad k=1,2,
\end{equation}
where $0\le \lambda_k\le 1$, and $\tau$ denotes an arbitrary input density operator on the target qubit. We use the Pauli realization
\begin{equation}
\mathcal D_{\lambda_k}(\tau)=a_k\tau+b_k(X\tau X+Y\tau Y+Z\tau Z),
\end{equation}
with $a_k=(1+3\lambda_k)/4$ and $b_k=(1-\lambda_k)/4$. Let the intermediate operation be $U=Z$, and postselect the $(+)$ control outcome $|+\rangle$, corresponding to $\phi=0$. If
\begin{equation}
\lambda_1\lambda_2\le \frac13,
\qquad 7\lambda_1\lambda_2+\lambda_1+\lambda_2>1
\end{equation}
hold simultaneously, then both definite-order Choi outputs and every classical mixture of them are PPT, whereas the normalized ICO $(+)$-conditioned Choi output is NPT and occurs with nonzero probability. These inequalities therefore define an analytic postselection-advantage region for the conditional map generated by the quantum switch. The proof is given in \cref{App_prop}.
\end{proposition}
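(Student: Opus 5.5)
The plan is to reduce everything to Bell-diagonal (Pauli-diagonal) two-qubit states. For these, the PPT question is settled by a single inequality on the weights: a Bell-diagonal state is NPT if and only if its largest Bell weight exceeds $1/2$. That is the form in which both inequalities of Proposition~\ref{prop:choi-advantage} will appear.

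\emph{Definite orders.} The depolarizing channel is unitarily covariant, so $\mathcal D_{\lambda_2}\circ\mathcal U\circ\mathcal D_{\lambda_1}=\mathcal U\circ\mathcal D_{\lambda_1\lambda_2}$, and the same holds for the reversed order. Hence
\[
\rho_{RT}^{12}=\rho_{RT}^{21}=(I\otimes Z)\bigl[\lambda_1\lambda_2|\Phi^+\rangle\langle\Phi^+|+(1-\lambda_1\lambda_2)I/4\bigr](I\otimes Z).
\]
This is a local rotation of an isotropic state, which is PPT if and only if $\lambda_1\lambda_2\le 1/3$. Since the two endpoints coincide, every classical mixture equals the same state and is PPT (PPT states also form a convex set).

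\emph{ICO branch.} I will apply Theorem~\ref{thm:pauli-parity} with $P_i,Q_j\in\{I,X,Y,Z\}$, weights $a_k$ for $I$ and $b_k$ for each nonidentity Pauli, and $U=Z$. For each of the 16 path pairs I check whether $QZP=PZQ$. The retained $(s=0)$ pairs and their operators $S=PZQ$ (up to phase) are:
\begin{itemize}
\item $(I,I)$ and the three diagonal pairs $(X,X),(Y,Y),(Z,Z)$, all giving $Z$;
\item $(I,Z)$ and $(Z,I)$, giving $I$;
\item $(X,Z)$ and $(Z,X)$, giving $X$;
\item $(Y,Z)$ and $(Z,Y)$, giving $Y$.
\end{itemize}
The discarded pairs are $(I,X),(I,Y),(X,I),(Y,I),(X,Y),(Y,X)$. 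For instance, $YZX=i$ while $XZY=-i$, so $(X,Y)$ has $s=1$.

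Collecting weights gives
\[
\Lambda_+(\tau)=(a_1a_2+3b_1b_2)Z\tau Z+(a_1b_2+b_1a_2)\tau+2b_1b_2(X\tau X+Y\tau Y).
\]
Its success probability is $P_+=a_1a_2+a_1b_2+b_1a_2+7b_1b_2$, which is positive because $a_k\ge 1/4$.

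The normalized Choi output is therefore Bell-diagonal, with weights proportional to $a_1a_2+3b_1b_2$ on $(I\otimes Z)|\Phi^+\rangle$, $a_1b_2+b_1a_2$ on $|\Phi^+\rangle$, and $2b_1b_2$ on each of the remaining two Bell states.

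\emph{The main computation.} This is the only step that is not immediate. I will show that the $Z$-weight exceeds $1/2$, which by the Bell-diagonal criterion makes the state NPT. The condition is
\[
2(a_1a_2+3b_1b_2)>P_+\quad\Longleftrightarrow\quad a_1a_2-a_1b_2-b_1a_2-b_1b_2>0.
\]
Substituting $a_k=(1+3\lambda_k)/4$ and $b_k=(1-\lambda_k)/4$ and multiplying by $16$, the left-hand side expands to $14\lambda_1\lambda_2+2\lambda_1+2\lambda_2-2$. Positivity of this is exactly $7\lambda_1\lambda_2+\lambda_1+\lambda_2>1$.

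Combining this with the definite-order condition $\lambda_1\lambda_2\le 1/3$ gives the claimed region. The main risk is bookkeeping in the 16-pair sign table, so I will double-check each pair using $XZ=-iY$, $ZX=iY$, $ZY=-iX$ and $YZ=iX$.
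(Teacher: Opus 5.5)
Your proposal is correct and follows the paper's proof essentially step for step. Covariance reduces both definite orders to a locally rotated isotropic Choi state with parameter $\lambda_1\lambda_2$, and Theorem~\ref{thm:pauli-parity} with $U=Z$ gives the same Pauli weights $w_Z=a_1a_2+3b_1b_2$, $w_I=a_1b_2+b_1a_2$, $w_X=w_Y=2b_1b_2$, after which the Bell-diagonal criterion $w_Z/P_+>1/2$ reduces exactly to $7\lambda_1\lambda_2+\lambda_1+\lambda_2>1$ (your sign table and the reduction to $a_1a_2-a_1b_2-b_1a_2-b_1b_2>0$ both check out).
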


The analytic region above covers a continuous range of parameters. In the symmetric case $\lambda_1=\lambda_2=\lambda$, the advantage region is approximately $0.2612<\lambda\le 0.5774$. This reference--target construction provides channel-level evidence for the causal-order interference advantage. The full bipartite state certification problem is addressed in the target state model of the main text, where we ask whether quantum-switch preprocessing enlarges the parameter region in which the PPT criterion or an entanglement witness certifies $A|B$ entanglement in a given state $\rho_{AB}$.

\subsection{Proof of Proposition~\ref{prop:choi-advantage}}
\label{App_prop}

We first consider the definite order process $\mathcal D_{\lambda_2}\circ \mathcal U_Z\circ \mathcal D_{\lambda_1}$, where $\mathcal U_Z(\tau)=Z\tau Z$. Owing to the Pauli covariance of the depolarizing channel, the composition of two depolarizing channels remains depolarizing with effective parameter $\lambda_1\lambda_2$. The intermediate $Z$ operation induces only a local unitary rotation on the corresponding Choi state and therefore does not affect its entanglement. The Choi state of a single-qubit depolarizing channel is Bell-diagonal, with maximal Bell weight $(1+3\lambda)/4$. For a two-qubit Bell-diagonal state, the negativity is given by $\max\{0,p_{\max}-1/2\}$, implying that NPT occurs if and only if $\lambda>1/3$. Consequently, both definite-order Choi states, and hence every classical mixture of them, are PPT whenever $\lambda=\lambda_1\lambda_2\le 1/3$.

For the ICO $(+)$ outcome, each pair of Pauli paths $P_i,Q_j\in \{I,X,Y,Z\}$ gives two causal orders, $H_{ij}^{\rightarrow}=\sqrt{p_iq_j}Q_jZP_i$ and $H_{ij}^{\leftarrow}=\sqrt{p_iq_j}P_iZQ_j$, with conditional Kraus operator $M_{ij}^{+}=\sqrt{p_iq_j}(Q_jZP_i+P_iZQ_j)/2$. Since Pauli operators either commute or anticommute, the $(+)$ outcome retains only the paths satisfying $Q_jZP_i=P_iZQ_j$. Grouping the surviving terms by the resulting effective Pauli operator gives the unnormalized Pauli weights
\begin{equation}
\begin{aligned}
w_Z=a_1a_2+3b_1b_2=\frac{1+3\lambda_1\lambda_2}{4},\\
w_I=a_1b_2+b_1a_2=\frac{1+\lambda_1+\lambda_2-3\lambda_1\lambda_2}{8},
\end{aligned}
\end{equation}
and $w_X=w_Y=2b_1b_2=(1-\lambda_1)(1-\lambda_2)/8$. The success probability of the $(+)$ outcome is
\begin{equation}
P_+
= w_I+w_X+w_Y+w_Z
= \frac{5+5\lambda_1\lambda_2-\lambda_1-\lambda_2}{8},
\end{equation}
which is strictly positive for $0\le \lambda_1,\lambda_2\le 1$. The maximal Bell weight of the normalized Choi state is
\begin{equation}
p_{\max}^{(+)}
= \frac{w_Z}{P_+}
= \frac{2(1+3\lambda_1\lambda_2)}
{5+5\lambda_1\lambda_2-\lambda_1-\lambda_2}.
\end{equation}
Using the Bell-diagonal two-qubit formula $\mathcal N=\max\{0,p_{\max}-1/2\}$, the negativity of the ICO $(+)$-conditioned output follows, with the NPT condition $7\lambda_1\lambda_2+\lambda_1+\lambda_2>1$.

\section{Proof of Lemma~\ref{Lemma1}}
\label{proof_L1}

We first show that entanglement cannot be generated from a separable input. For a separable state $\sigma_{AB}=\sum_\ell q_\ell \sigma_\ell^A \otimes \sigma_\ell^B$ with $q_\ell\ge 0$ and $\sum_\ell q_\ell=1$, the conditional output reads
\begin{equation}
\Lambda_r(\sigma_{AB}) = \sum_{\ell,\kappa} q_\ell (A_{r,\kappa}\sigma_\ell^A A_{r,\kappa}^\dagger) \otimes (B_{r,\kappa}\sigma_\ell^B B_{r,\kappa}^\dagger),
\end{equation}
where each term is a positive product operator. If $\operatorname{Tr}\Lambda_r(\sigma_{AB})>0$, normalization simply rescales the convex combination to sum to unity, so $\sigma_{AB}^{(r)}$ remains separable. Thus, no entanglement is generated from a separable input by any conditional map.

We now prove the average monotonicity of the probability-weighted negativity of the conditional states. Let $\Xi=\rho_{AB}^{T_B}$ and define $\Phi_r=\Gamma_B\circ \Lambda_r\circ \Gamma_B$, where $\Gamma_B$ denotes partial transposition on subsystem $B$. Since each $\Phi_r$ is positive and the total map $\sum_r\Lambda_r$ is trace preserving, the flagged map
\begin{equation}
\widetilde{\Phi}(\Xi)
= \sum_r
|r\rangle\langle r|\otimes \Phi_r(\Xi)
\end{equation}
is also positive and trace preserving. Here the orthogonal classical register records the postselection outcome $r$. Let $\Xi=\Xi_+-\Xi_-$ be the positive--negative decomposition of the Hermitian operator $\Xi$, with $\Xi_+,\Xi_-\ge 0$. Positivity and trace preservation of $\widetilde{\Phi}$ imply
\begin{equation}
\begin{aligned}
\|\widetilde{\Phi}(\Xi)\|_1
&\le \operatorname{Tr}\widetilde{\Phi}(\Xi_+)
+ \operatorname{Tr}\widetilde{\Phi}(\Xi_-)\\
&= \operatorname{Tr}\Xi_+ + \operatorname{Tr}\Xi_-\\
&= \|\Xi\|_1 .
\end{aligned}
\end{equation}
Because different $r$ blocks of the classical register are mutually orthogonal, $\|\widetilde{\Phi}(\Xi)\|_1=\sum_r\|\Phi_r(\Xi)\|_1$, and hence $\sum_r\|\Phi_r(\Xi)\|_1\le \|\Xi\|_1$. Moreover, $\Phi_r(\Xi)=[\Lambda_r(\rho_{AB})]^{T_B}$ and $p_r=\operatorname{Tr}\Lambda_r(\rho_{AB})=\operatorname{Tr}\Phi_r(\Xi)$. Therefore, the probability-weighted negativity of each conditional state reads
\begin{equation}
p_r\mathcal N(\rho_{AB}^{(r)})
= \frac{\|\Phi_r(\Xi)\|_1-p_r}{2}.
\end{equation}
It follows that
\begin{equation}
\sum_r p_r\mathcal N(\rho_{AB}^{(r)})
= \frac{1}{2}
\left( \sum_r\|\Phi_r(\Xi)\|_1 - \sum_r p_r \right).
\end{equation}
Since $\sum_r\Lambda_r$ is trace preserving, $\sum_rp_r=1$. Thus,
\begin{equation}
\sum_r p_r\mathcal N(\rho_{AB}^{(r)})
\le \frac{\|\Xi\|_1-1}{2}
= \mathcal N(\rho_{AB}).
\end{equation}
Finally, since each term $p_r\mathcal N(\rho_{AB}^{(r)})$ is non-negative, each individual outcome also satisfies $p_r\mathcal N(\rho_{AB}^{(r)})\le \mathcal N(\rho_{AB})$. However, once labeled by a classical outcome, the conditional Kraus operators of the quantum switch remain coherent sums of the corresponding definite-order Kraus operators rather than reducing to a classical selection of the two orders. Therefore, they do not automatically admit a separable Kraus decomposition or satisfy the PPT-preserving condition. For special cases such as Pauli or Weyl noise, the sign or phase relation between the two ordered products in Theorem~\ref{thm:pauli-parity} and Corollary~\ref{cor:weyl-phase} reduces each conditional map to a selection and reweighting of local Pauli/Weyl paths, so that the corresponding locality or PPT-preserving condition can be satisfied.

\section{Input-state dependence of the ICO advantage in entanglement certification}
\label{App_robust}

The negativity analysis in the main text uses maximally entangled target states. To examine the dependence on the Schmidt coefficients, we restrict to the computational-basis family
\begin{equation}
|\psi_\varphi\rangle=\cos{\varphi}|00\rangle+\sin{\varphi}|11\rangle,
\qquad \varphi\in [0,\pi/4].
\label{eq:pure_state}
\end{equation}
Every pure two-qubit state is local-unitary equivalent to a state of this form. Because the inserted unitary $U=\sigma_z\otimes \sigma_z$ is fixed in the computational basis, different local basis choices are not automatically equivalent and would require rotating $U$ accordingly; we therefore scan $\varphi$ within the family above. Both channels are global depolarizing channels on $AB$ with equal identity Kraus weights $p_1=p_2=p$. \cref{fig:state-change-pauliz} compares the definite-order and fixed $(+)$-conditioned quantum-switch outputs as functions of $\varphi$ for several values of $p$.
\begin{figure}[htpb]
\centering
\includegraphics[width=0.35\textwidth]{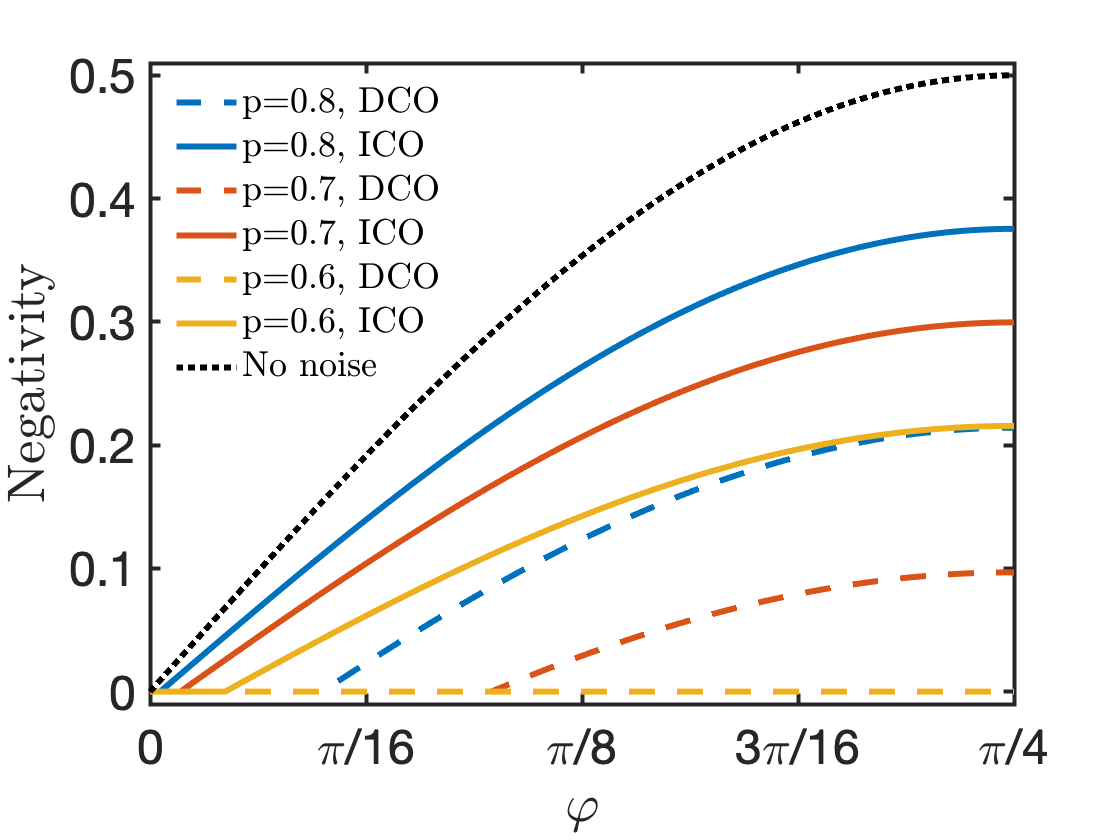}
\caption{Negativity under two depolarizing channels with the inserted local unitary $U=\sigma_z\otimes \sigma_z$. The negativity is plotted versus the Schmidt parameter $\varphi$ of the input state. Dashed and solid curves correspond to the DCO and postselected ICO outputs, respectively, for different depolarizing parameters $p$, while the black dotted curve denotes the noiseless reference. The postselected ICO output has a negativity no smaller than the corresponding DCO output throughout the plotted range.
}
\label{fig:state-change-pauliz}
\end{figure}

For every plotted value of $\varphi$, the fixed $(+)$-conditioned output has negativity no smaller than that of the definite-order output at the same $p$. Across the plotted noise strengths, the conditional output also remains NPT in cases where the definite-order output has already become PPT.

\section{Spectral analysis of path difference and the negativity of conditional states}
\label{spec}

To clarify how $\mathcal{Q}_\rho(U)$ drives the partial-transpose spectrum of the conditional outputs across the zero-eigenvalue boundary, we provide an explicit spectral representation of the negativity of the conditional states. Let $X_\pm =\rho_f^{\pm,0}$ denote the unnormalized conditional output states for $\phi=0$. From the expansion of the unnormalized conditional states, we obtain
\begin{equation}
X_- = \frac{1}{4}\mathcal{Q}_\rho(U),
\qquad X_+ = \rho_f^{\mathrm{MIX}} - \frac{1}{4}\mathcal{Q}_\rho(U),
\label{eq:Qfilter}
\end{equation}
with corresponding success probabilities $P_- = \mathrm{Tr}[\mathcal{Q}_\rho(U)]/4$ and $P_+ = 1 - \mathrm{Tr}[\mathcal{Q}_\rho(U)]/4$. The normalized conditional states are denoted by $\rho_\pm =\widehat{\rho}_f^{\pm,0}$ and take the form
\begin{equation}
\rho_- = \frac{\mathcal{Q}_\rho(U)}{\mathrm{Tr}[\mathcal{Q}_\rho(U)]},
\qquad \rho_+ =
\frac{\rho_f^{\mathrm{MIX}}-\mathcal{Q}_\rho(U)/4}
{1-\mathrm{Tr}[\mathcal{Q}_\rho(U)]/4}.
\end{equation}

\begin{proposition}[Variational form of conditional-state negativity and the PPT criterion]
\label{prop:NPT-filter}

Let $P_+>0$. The negativity of the $(+)$-conditioned state under partial transpose on subsystem $B$ reads
\begin{equation}
\begin{aligned}
\mathcal N(\rho_+)
&= \frac{1}{P_+}\operatorname{Tr} \left[\left(\frac{1}{4}\mathcal Q_\rho(U)^{T_B}
- (\rho_f^{\mathrm{MIX}})^{T_B}\right)_+\right] \\
&= \frac{1}{P_+}\max_{0\le \Pi \le I}
\left[ \frac{1}{4}\operatorname{Tr}(\Pi \mathcal Q_\rho(U)^{T_B})
- \operatorname{Tr}(\Pi (\rho_f^{\mathrm{MIX}})^{T_B}) \right],
\end{aligned}
\end{equation}
where $(\cdot)_+$ denotes the positive spectral part. If $\operatorname{Tr}\mathcal Q_\rho(U)>0$, then
\begin{equation}
\rho_-^{T_B}=\frac{\mathcal Q_\rho(U)^{T_B}}{\operatorname{Tr}\mathcal Q_\rho(U)},
\qquad \mathcal N(\rho_-)=
\frac{\operatorname{Tr}[(-\mathcal Q_\rho(U)^{T_B})_+]}{\operatorname{Tr}\mathcal Q_\rho(U)}.
\label{eq:rho-minus-negativity}
\end{equation}
Thus $\rho_+$ is NPT iff there exists a normalized $|\eta\rangle$ such that
\begin{equation}
\frac{1}{4}\langle\eta|\mathcal Q_\rho(U)^{T_B}|\eta\rangle
> \langle\eta|(\rho_f^{\mathrm{MIX}})^{T_B}|\eta\rangle.
\label{eq:spec_dis}
\end{equation}
\end{proposition}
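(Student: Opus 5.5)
The plan is to reduce everything to one identity for the negativity of a unit-trace Hermitian operator, and then substitute the explicit forms of $\rho_\pm$ from \cref{eq:Qfilter}. For any Hermitian $Y$ with $\operatorname{Tr}Y=1$, split $Y=Y_+-Y_-$ into positive and negative spectral parts. Then $\|Y\|_1=\operatorname{Tr}Y_++\operatorname{Tr}Y_-=1+2\operatorname{Tr}Y_-$, so
\begin{equation}
\frac{\|Y\|_1-1}{2}=\operatorname{Tr}[(-Y)_+].
\end{equation}
Because partial transposition preserves the trace, this applies to $Y=\rho_\pm^{T_B}$ and gives $\mathcal N(\rho_\pm)=\operatorname{Tr}[(-\rho_\pm^{T_B})_+]$.

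For the $(+)$ branch, linearity of $\Gamma_B$ gives $\rho_+^{T_B}=\bigl((\rho_f^{\mathrm{MIX}})^{T_B}-\mathcal Q_\rho(U)^{T_B}/4\bigr)/P_+$ with $P_+>0$. Positive homogeneity of $(\cdot)_+$ then yields the first line of the stated formula. The second line is the standard variational (Ky Fan type) characterization $\operatorname{Tr}[Z_+]=\max_{0\le\Pi\le I}\operatorname{Tr}(\Pi Z)$ for Hermitian $Z$, applied to $Z=\mathcal Q_\rho(U)^{T_B}/4-(\rho_f^{\mathrm{MIX}})^{T_B}$. To prove that characterization, diagonalize $Z$: the projector onto the positive eigenspace attains $\operatorname{Tr}Z_+$. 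For any $0\le\Pi\le I$ we have
\begin{equation}
\operatorname{Tr}(\Pi Z)=\operatorname{Tr}(\Pi Z_+)-\operatorname{Tr}(\Pi Z_-)\le\operatorname{Tr}(\Pi Z_+)\le\operatorname{Tr}Z_+,
\end{equation}
using $\operatorname{Tr}(\Pi Z_-)\ge0$ (both factors are positive semidefinite) and $\operatorname{Tr}((I-\Pi)Z_+)\ge0$.

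For the $(-)$ branch, assume $\operatorname{Tr}\mathcal Q_\rho(U)>0$, so that $P_-=\operatorname{Tr}\mathcal Q_\rho(U)/4>0$. Then $\rho_-=\mathcal Q_\rho(U)/\operatorname{Tr}\mathcal Q_\rho(U)$, and transposing gives $\rho_-^{T_B}$ as claimed. The same identity gives $\mathcal N(\rho_-)=\operatorname{Tr}[(-\mathcal Q_\rho(U)^{T_B})_+]/\operatorname{Tr}\mathcal Q_\rho(U)$, where we again pull out the positive scalar.

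The NPT criterion follows from the first formula. $\rho_+$ is NPT iff $\mathcal N(\rho_+)>0$, iff $Z$ has a positive eigenvalue, iff $\langle\eta|Z|\eta\rangle>0$ for some unit vector $\eta$. For the forward direction take $\eta$ to be an eigenvector with positive eigenvalue; the converse is the Rayleigh quotient bound $\lambda_{\max}(Z)\ge\langle\eta|Z|\eta\rangle$. Written out, $\langle\eta|Z|\eta\rangle>0$ is exactly \cref{eq:spec_dis}.

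Nothing here is a real obstacle. The only points needing care are the bookkeeping that $\rho_f^{\mathrm{MIX}}-\mathcal Q_\rho(U)/4$ has trace $P_+$, which follows from \cref{eq:Qfilter}, and keeping the sign convention of $(\cdot)_+$ consistent with the order of terms in $Z$.
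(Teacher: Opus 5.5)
Your proposal is correct and follows the paper's own argument. The paper likewise derives the result from $\rho_+^{T_B}=((\rho_f^{\mathrm{MIX}})^{T_B}-\mathcal Q_\rho(U)^{T_B}/4)/P_+$, the definition of negativity, and the identity $\operatorname{Tr}(C_+)=\max_{0\le\Pi\le I}\operatorname{Tr}(\Pi C)$ attained at the positive spectral projector. You additionally spell out steps the paper leaves implicit: the unit-trace identity $\mathcal N(\rho_\pm)=\operatorname{Tr}[(-\rho_\pm^{T_B})_+]$, a proof of the variational formula, and the Rayleigh-quotient argument for the NPT criterion.
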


The result follows from $\rho_+^{T_B}=((\rho_f^{\mathrm{MIX}})^{T_B}-\mathcal Q_\rho(U)^{T_B}/4)/P_+$, the definition of negativity, and $\operatorname{Tr}(C_+)=\max_{0\le \Pi\le I}\operatorname{Tr}(\Pi C)$, with optimal $\Pi$ given by the positive spectral projector of $C=\mathcal Q_\rho(U)^{T_B}/4-(\rho_f^{\mathrm{MIX}})^{T_B}$.

The same criterion identifies the eigendirections that are most susceptible to crossing the PPT boundary. Let $(\rho_f^{\mathrm{MIX}})^{T_B}|m_\ell\rangle=\nu_\ell|m_\ell\rangle$. When the mixed output is PPT or close to the PPT boundary, eigenvectors with small $\nu_\ell$ identify the most sensitive directions for NPT threshold crossing. Along such directions,
\begin{equation}
\langle m_\ell|\rho_+^{T_B}|m_\ell\rangle
=\frac{1}{P_+}\left(\nu_\ell-\frac{1}{4}\langle m_\ell|\mathcal Q_\rho(U)^{T_B}|m_\ell\rangle\right),
\end{equation}
so that the condition $\langle m_\ell|\mathcal Q_\rho(U)^{T_B}|m_\ell\rangle>4\nu_\ell$ yields an NPT certification.

If $(\rho_f^{\mathrm{MIX}})^{T_B}$ and $\mathcal Q_\rho(U)^{T_B}$ are approximately jointly diagonalizable on the relevant subspace, then
\begin{equation}
\mathcal N(\rho_+)
\approx \frac{1}{P_+}\sum_\ell
\left[ \frac{1}{4}\langle m_\ell|\mathcal Q_\rho(U)^{T_B}|m_\ell\rangle
-\nu_\ell
\right]_+,
\label{eq:neg-diagonal}
\end{equation}
with equality when $[(\rho_f^{\mathrm{MIX}})^{T_B},\mathcal Q_\rho(U)^{T_B}]=0$. In this sense, NPT in the $(+)$-conditioned state corresponds to directional crossings of the zero eigenvalue boundary, determined by the competition between the two spectra along $|m_\ell\rangle$. Equivalently, \cref{eq:spec_dis} is the witness condition $\operatorname{Tr}(W_\eta\rho_+)<0$ for $W_\eta=(|\eta\rangle\langle\eta|)^{T_B}$, and replacing $W_\eta$ by a general entanglement witness gives the corresponding witness criterion.

For Pauli channels and Pauli unitaries, this spectral structure follows directly from Theorem~\ref{thm:pauli-parity}. The $(-)$ outcome retains exactly the paths with $s_{ij}(U)=1$, so that $X_-=\mathcal Q_\rho(U)/4$ collects their contribution, whereas $X_+=\rho_f^{\mathrm{MIX}}-\mathcal Q_\rho(U)/4$ collects the complementary paths retained by the $(+)$ outcome. By \cref{eq:spec_dis}, the normalized $(+)$-conditioned state is NPT whenever the expectation value of $\mathcal Q_\rho(U)^{T_B}/4$ exceeds that of $(\rho_f^{\mathrm{MIX}})^{T_B}$ for some normalized $|\eta\rangle$. Under $X$-basis control measurements, $\mathcal N_{\mathrm{ICO}}^{U}$ is given by the larger of the negativities of the two conditional output states. The operator $\mathcal Q_\rho(U)$ therefore connects the negativity gain $\Delta_\mathcal N^U$ to the operator noncommutativity along the two ordered paths, through the projection of $\mathcal Q_\rho(U)^{T_B}$ onto the eigenvectors of $(\rho_f^{\mathrm{MIX}})^{T_B}$ with near-zero eigenvalue.

Further, within the above spectral representation, $\mathcal I(U)=\|\mathcal Q_\rho(U)\|_{\mathrm{HS}}$ quantifies the overall magnitude of the path-difference contribution. By the Cauchy--Schwarz inequality for the Hilbert--Schmidt inner product and $\|W_\eta\|_{\mathrm{HS}}=1$, one has $|\operatorname{Tr}[W_\eta\mathcal Q_\rho(U)]|\le \|\mathcal Q_\rho(U)\|_{\mathrm{HS}}=\mathcal I(U)$ for any normalized witness direction $W_\eta=(|\eta\rangle\langle\eta|)^{T_B}$. Hence the displacement term $\operatorname{Tr}[W_\eta\mathcal Q_\rho(U)]/4$ on the left-hand side of \cref{eq:spec_dis} is bounded in magnitude by $\mathcal I(U)/4$, while the actual enhancement is determined by the projection $\operatorname{Tr}[W_\eta\mathcal Q_\rho(U)]$ along the relevant NPT detection directions.

When the mixed output is PPT, so that $(\rho_f^{\mathrm{MIX}})^{T_B}\ge 0$, let $\nu_{\min}$ denote the smallest eigenvalue of $(\rho_f^{\mathrm{MIX}})^{T_B}$. Using invariance of the Hilbert--Schmidt norm under partial transposition and $\|A\|_\infty \le \|A\|_{\mathrm{HS}}$, one obtains $\mathcal Q_\rho(U)^{T_B}\le \mathcal I(U)  I$ together with $(\rho_f^{\mathrm{MIX}})^{T_B}\ge \nu_{\min} I$, which yields the necessary threshold condition
\begin{equation}
\mathcal N(\rho_+)>0
\ \Rightarrow \
\mathcal I(U)>4\nu_{\min}.
\label{eq:I-threshold}
\end{equation}
Thus, NPT in the $(+)$-conditioned state can only arise once the path-difference magnitude $\mathcal I(U)$ exceeds the spectral gap $4\nu_{\min}$ separating the classical mixed output from the PPT boundary.

For the restricted random-unitary family considered here, crossing the PPT boundary is controlled by the projections of $\mathcal Q_\rho(U)^{T_B}$ onto the near-zero eigendirections of $(\rho_f^{\mathrm{MIX}})^{T_B}$, each of which is bounded in magnitude by $\mathcal I(U)$. A larger $\mathcal I(U)$ therefore permits, but does not guarantee, a larger negativity gain. The close covariation in \cref{fig:9} is consistent with the relevant projections remaining substantial across the scanned local-unitary family.

\section{Construction of the Tiles UPB witness}
\label{App_PPT}

This appendix proves that $W_{\mathrm{UPB}}=\Pi_{\mathrm{UPB}}-I_9/45$ is a valid nondecomposable witness for the Tiles UPB state. \cref{sec:AnalyticWitness} reduces validity to the bound $\mu_*>1/45$, and \cref{subsec:GramIdentity} proves the polynomial inequality used to establish that bound.

\subsection{Validity of the Tiles UPB witness}
\label{sec:AnalyticWitness}

We use the Tiles construction introduced by Bennett \textit{et al.}~\cite{BennettUnextendibleProductPRL1999}. In the standard basis $\{\ket0,\ket1,\ket2\}$ of each qutrit, define the following two-qutrit product states:
\begin{equation}
\begin{aligned}
\ket{\phi_1}&=\ket0\otimes \frac{\ket0-\ket1}{\sqrt2},
\quad \ket{\phi_2}=\frac{\ket0-\ket1}{\sqrt2}\otimes \ket2,\\
\ket{\phi_3}&=\ket2\otimes \frac{\ket1-\ket2}{\sqrt2},\quad
\ket{\phi_4}=\frac{\ket1-\ket2}{\sqrt2}\otimes \ket0,\\
\ket{\phi_5}&=\frac{\ket0+\ket1+\ket2}{\sqrt3}\otimes \frac{\ket0+\ket1+\ket2}{\sqrt3}.
\end{aligned}
\label{eq:TilesVectors}
\end{equation}
These vectors are pairwise orthogonal and normalized. Let $\Pi_{\mathrm{UPB}}=\sum_{i=1}^{5}|\phi_i\rangle\langle\phi_i|$ and $\rho_{\mathrm{UPB}}=(I_9-\Pi_{\mathrm{UPB}})/4$. The witness threshold is the minimum overlap of $\Pi_{\UPB}$ with a normalized product state,
\begin{equation}
\mu_*
=\min_{\norm a=\norm b=1}
\bra{a,b}\Pi_{\UPB}\ket{a,b}.
\label{eq:MuDefinition}
\end{equation}
Compactness of the set of product states and the unextendibility of the UPB imply that $\mu_*>0$~\cite{TERHAL200161}. Determining a certified threshold $\epsilon\le \mu_*$ is the key step in constructing a witness of this form. For any $\epsilon>0$, define
\begin{equation}
W_\epsilon=\Pi_{\UPB}-\epsilon\Id_9.
\label{eq:ShiftedWitness}
\end{equation}
This operator has nonnegative expectation values on all separable states if and only if $\epsilon\leq \mu_*$, while $\Tr(W_\epsilon\rho_{\UPB})=-\epsilon<0$. Thus, for $0<\epsilon\leq \mu_*$, $W_\epsilon$ detects the PPT state $\rho_{\UPB}$.

Write $\ket{\phi_i}=\ket{\alpha_i}\ket{\beta_i}$. For fixed $b\in \mathbb C^3$, let $F(b)=\sum_i|\langle b|\beta_i\rangle|^2\dyad{\alpha_i}{\alpha_i}$, so that $\bra{a,b}\Pi_{\UPB}\ket{a,b}=\bra aF(b)\ket a$. The Rayleigh--Ritz principle gives
\begin{equation}
\mu_*=\min_{\norm b=1}\lambda_{\min}[F(b)].
\label{eq:MuReduced}
\end{equation}
For a real vector $b=(x,y,z)^{\mathsf T}$, direct substitution yields
\begin{equation}
\begin{aligned}
F(b)={}&
\frac{(x-y)^2}{2}\dyad{0}{0}
+\frac{z^2}{2}(\ket0-\ket1)(\bra0-\bra1)\\
&+\frac{(y-z)^2}{2}\dyad{2}{2}
+\frac{x^2}{2}(\ket1-\ket2)(\bra1-\bra2)\\
&+\frac{(x+y+z)^2}{9}\dyad{\bm e}{\bm e},
\end{aligned}
\label{eq:ReducedMatrixExplicit}
\end{equation}
where $\ket{\bm e}=\ket0+\ket1+\ket2$ is unnormalized. The matrix $F(b)$ is positive semidefinite and is a homogeneous quadratic function of $x,y,z$.

For an arbitrary $3\times 3$ matrix, let $e_2(F)=[\Tr(F)^2-\Tr(F^2)]/2$. For nonzero $b$, let the eigenvalues of $F(b)$ be $0\leq \zeta_1\leq \zeta_2\leq \zeta_3$. The Gram-matrix certificate proved in \cref{subsec:GramIdentity} yields
\begin{equation}
45\det F(b)-\norm b^2e_2[F(b)]>0.
\label{eq:DeterminantInequality}
\end{equation}
Since $F(b)\geq 0$ and \cref{eq:DeterminantInequality} holds, $\det F(b)>0$, and hence $F(b)>0$ and $e_2[F(b)]>0$. Moreover, because $e_2(F)\geq \zeta_2\zeta_3$,
\begin{equation}
\lambda_{\min}[F(b)]
=\frac{\det F(b)}{\zeta_2\zeta_3}
\geq \frac{\det F(b)}{e_2[F(b)]}
>\frac{\norm b^2}{45}.
\label{eq:RealEigenvalueBound}
\end{equation}

For a complex vector $b=u+\mathrm{i}v$, where $u,v\in \mathbb R^3$, all $\beta_i$ are real vectors, and therefore $|\braket{\beta_i}{b}|^2=\braket{\beta_i}{u}^2+\braket{\beta_i}{v}^2$. It follows that $F(b)=F(u)+F(v)$. Using the Hermitian-matrix inequality $\lambda_{\min}(L_1+L_2)\geq \lambda_{\min}(L_1)+\lambda_{\min}(L_2)$ and applying \cref{eq:RealEigenvalueBound} separately to the nonzero components of $u$ and $v$, we obtain, for every $b\neq 0$,
\begin{equation}
\lambda_{\min}[F(b)]
>\frac{\norm u^2+\norm v^2}{45}
=\frac{\norm b^2}{45}.
\label{eq:ComplexEigenvalueBound}
\end{equation}
The function $\lambda_{\min}[F(b)]$ is continuous on the complex unit sphere and therefore attains its minimum. Together with \cref{eq:ComplexEigenvalueBound}, this shows that $\mu_*>1/45$.

We therefore fix $\epsilon=1/45$ and take $W_{\mathrm{UPB}}=\Pi_{\mathrm{UPB}}-I_9/45$ as the witness used in the main text. The state $\rho_{\mathrm{UPB}}$ is PPT and the minimum eigenvalue of its partial transpose equals zero, so $W_{\mathrm{UPB}}$ detects a PPT state and is therefore nondecomposable \cite{LewensteinOptimizationOfPRA2000}. Moreover, for any product unitary $V_A\otimes V_B$, the vector $(V_A\otimes V_B)^\dagger|a,b\rangle$ remains a normalized product state, so $(V_A\otimes V_B)W_{\mathrm{UPB}}(V_A\otimes V_B)^\dagger$ is again a valid witness, and it remains nondecomposable because it detects the PPT state $(V_A\otimes V_B)\rho_{\mathrm{UPB}}(V_A\otimes V_B)^\dagger$. Local rotations of $W_{\mathrm{UPB}}$ thus generate the witness family compared on the ICO and DCO outputs in \cref{subsec5_C}.

\subsection{A Gram identity for a sextic polynomial}
\label{subsec:GramIdentity}

To prove \cref{eq:DeterminantInequality}, note that its left-hand side is a homogeneous polynomial of degree six. To remove the rational denominators in $F(b)$, define
\begin{equation}
S(x,y,z)
= 72\left[
45\det F(b)
- (x^2+y^2+z^2)e_2\bigl(F(b)\bigr) \right].
\label{eq:P-definition}
\end{equation}
Thus, $S(x,y,z)$ is a homogeneous sextic polynomial with integer coefficients.

There are ten homogeneous cubic monomials in three variables. Define
\[
m(x,y,z)=(x^3,x^2y,x^2z,xy^2,xyz,xz^2,y^3,y^2z,yz^2,z^3)^{\mathsf T}.
\]
For any real symmetric $10\times 10$ matrix $G$, $m^{\mathsf T}Gm$ is a homogeneous sextic polynomial. Therefore, a real symmetric matrix $G$ satisfying
\begin{equation}
S(x,y,z)
= m(x,y,z)^{\mathsf T}Gm(x,y,z)
\label{eq:gram-identity}
\end{equation}
is a Gram matrix of $S(x,y,z)$ with respect to the monomial basis $m$.

It remains to exhibit a positive-definite matrix satisfying \cref{eq:gram-identity}. The feasible Gram matrices form an affine space. We first located a positive-definite candidate by semidefinite programming. Because a floating-point solution is not an analytic certificate of positivity, we then reconstructed the candidate in exact arithmetic, obtaining the integer Gram matrix
\begin{widetext}
\begin{equation}
G=
\begin{pmatrix}
292 & 12 & 328 & -283 & -326 & -9 & -41 & 116 & -15 & -27\\
12 & 275 & -591 & -252 & 17 & -295 & -2 & -43 & 83 & -15\\
328 & -591 & 2028 & 265 & -280 & 1043 & -121 & 386 & -295 & -9\\
-283 & -252 & 265 & 557 & 247 & 386 & 10 & -152 & -43 & 116\\
-326 & 17 & -280 & 247 & 1002 & -280 & 42 & 247 & 17 & -326\\
-9 & -295 & 1043 & 386 & -280 & 2028 & -121 & 265 & -591 & 328\\
-41 & -2 & -121 & 10 & 42 & -121 & 56 & 10 & -2 & -41\\
116 & -43 & 386 & -152 & 247 & 265 & 10 & 557 & -252 & -283\\
-15 & 83 & -295 & -43 & 17 & -591 & -2 & -252 & 275 & 12\\
-27 & -15 & -9 & 116 & -326 & 328 & -41 & -283 & 12 & 292
\end{pmatrix}.
\label{eq:GramCertificate}
\end{equation}
\end{widetext}
Expanding $m^{\mathsf T}Gm$ and matching coefficients verifies \cref{eq:gram-identity} exactly. All ten leading principal minors of $G$ are positive, therefore $G>0$. For every nonzero $(x,y,z)\in \R^3$, the monomial vector $m$ is nonzero, so $S(x,y,z)=m^{\mathsf T}Gm>0$. Together with \cref{eq:P-definition}, this establishes \cref{eq:DeterminantInequality}.

\bibliography{Reference}

\end{document}